\newtheorem{theorem}{\textbf{Theorem}}
\newtheorem{remark}{\textbf{Remark}}
\newtheorem*{remark*}{\textbf{Remark}}
\newtheorem{claim}{\textbf{Claim}}
\newenvironment{subproof}[1][\proofname]{%
  \begin{proof}[#1]%
}{%
  \end{proof}%
}
\newcommand{\N}{\mathbb{N}}
\newcommand{\Poly}{{\mathsf{P}}}
\newcommand{\NP}{{\mathsf{NP}}}
\newcommand{\dom}{{\mathtt{dom}}}
\newcommand{\decisionpb}[3]{\fbox{\parbox{0.9\textwidth}{{\bf #1}\\{\it Input:} #2\\{\it Question:} #3}}}
\newcommand{\sqrtthree}{1.732}
\newcommand{\cells}{\mathcal{C}}
\newcommand{\adj}{{\tikz[scale=.1]{\draw[thick] (0,-1) -- (\sqrtthree/2,-1/2) -- (\sqrtthree/2,1/2) -- (0,1) -- (-\sqrtthree/2,1/2) -- (-\sqrtthree/2,-1/2) -- cycle;}}}
\newcommand{\adjs}[1]{{\leftrightsquigarrow_{#1}}}
\newcommand{\TODO}[1]{{\color{red}\fbox{TODO:} {\sf #1}}}
\newcommand{\COMMENT}[1]{}
\title{Rikudo is NP-complete}
\author[a]{Viet-Ha Nguyen}
\author[a,b]{K{\'e}vin Perrot}
\affil[a]{Univ. C\^{o}te d'Azur, CNRS, Inria, I3S, Sophia Antipolis, France}
\affil[b]{Aix-Marseille Univ., Univ. de Toulon, CNRS, LIS, Marseille, France}
\date{}
\begin{document}
\renewcommand{\labelitemi}{$\bullet$}
\renewcommand{\labelitemii}{$\bullet$}
\setlist[itemize,enumerate]{nosep}
\maketitle

%%%%%%%%%%%%%%%%%%%%%%%%%%
\begin{abstract}
  Rikudo is a number-placement puzzle,
  where the player is asked to complete a Hamiltonian path on a hexagonal grid,
  given some clues (numbers already placed and edges of the path).
  We prove that the game is complete for $\NP$, even if the puzzle has no hole.
  When all odd numbers are placed it is in $\Poly$,
  whereas it is still $\NP$-hard when all numbers of the form $3k+1$ are placed.
\end{abstract}

%%%%%%%%%%%%%%%%%%%%%%%%%%
\section{Introduction}

We discovered {\em Rikudo}
%while confined from March to May 2020
%in the small village of Valsonne (Rh\^{o}ne-Alpes, France),
in the game column of local newspaper {\em Le Progrès}.
It has been imagined a few years ago by two French guys, Paul and Xavier,
willing to offer a new challenge to {\em Sudoku} lovers\cite{rikudo}.
The study of the computational complexity of games is quite a tradition now
\cite{hanabi,hex1,checkers,candy1,mine1,gameset,go,npv20,gobang,hex2,mine2},
and we could not resist to state these results.
Even if they are not highly technical, some are not trivial.

The game consists in placing numbers from $1$ to $n$
on the cells of a hexagonal grid, so that the sequence
forms a Hamiltonian path (successive numbers must be placed on adjacent cells).
Furthermore, some numbers are already placed,
and some adjacencies are given so that the path must follow them.

In Section~\ref{s:model} we present the theoretical modelization of {\em Rikudo}
and the problems of solving such puzzles,
which are proven to be $\NP$-hard in Section~\ref{s:hardness}.
Given that the reductions do not make use of numbers already placed on the grid,
Section~\ref{s:conditions} discusses the complexity under the additional constraint
that some fixed fraction $\alpha$ of the numbers are already placed,
or that all numbers of the form $xk+1$ are already placed for some fixed integer $k$.

%%%%%%%%%%%%%%%%%%%%%%%%%%
\section{Model}
\label{s:model}

As for {\em Sudoku}, every {\em Rikudo} instance
one finds in newspapers are solvable.
In order to model it as a decision problem in complexity theory,
we have to create some negative instances
({\em i.e.} games that do not admit any solution),
and also to create games of arbitrary size.
The original game can be played at
\url{http://www.rikudo.fr/}.

Let us consider the hexagonal grid with pointy orientation\footnote{Let us recall that
hexagonal grids follow one of two orientations: {\em pointy} or {\em flat}.},
and denote $\cells$ the set of cells.
%and {\em cube coordinates}, {\em i.e.} cells are
%\[
%  T=\left\{(x,y,z) \in \Z^3 \mid x+y+z=0 \right\}
%\]
%and each $(x,y,z) \in T$ is the hexagon at the following coordinates:\\
%\centerline{
%  \begin{tikzpicture}
%    \draw (0,-1) -- (\sqrtthree/2,-1/2) -- (\sqrtthree/2,1/2) -- (0,1) -- (-\sqrtthree/2,1/2) -- (-\sqrtthree/2,-1/2) -- cycle;
%    \node[below] at (0,-1) {$(\frac{\sqrt{3}}{2}(x-y),\frac{3}{2}z-1)$};
%    \node[right] at (\sqrtthree/2,-1/2) {$(\frac{\sqrt{3}}{2}(x-y+1),\frac{3}{2}z-\frac{1}{2})$};
%    \node[right] at (\sqrtthree/2,1/2) {$(\frac{\sqrt{3}}{2}(x-y+1),\frac{3}{2}z+\frac{1}{2})$};
%    \node[above] at (0,1) {$(\frac{\sqrt{3}}{2}(x-y),\frac{3}{2}z+1)$};
%    \node[left] at (-\sqrtthree/2,1/2) {$(\frac{\sqrt{3}}{2}(x-y-1),\frac{3}{2}z+\frac{1}{2})$};
%    \node[left] at (-\sqrtthree/2,-1/2) {$(\frac{\sqrt{3}}{2}(x-y-1),\frac{3}{2}z-\frac{1}{2})$};
%  \end{tikzpicture}
%}
Given a subset $\tau \subset \cells$,
we associate the loopless graph $G_\tau$ on vertex set $\tau$ with adjacency relation
$\adj$ corresponding to pairs of cells sharing an edge.
%\[
%  \begin{array}{rlllllll}
%  (x,y,z) \,\adj\, (x',y',z') \iff
%    & \phantom{\vee} (& x = x'   &\wedge& y=y'-1 &\wedge& z=z'+1 &)\\
%    & \vee           (& x = x'   &\wedge& y=y'+1 &\wedge& z=z'-1 &)\\
%    & \vee           (& x = x'-1 &\wedge& y=y'   &\wedge& z=z'+1 &)\\
%    & \vee           (& x = x'+1 &\wedge& y=y'   &\wedge& z=z'-1 &)\\
%    & \vee           (& x = x'-1 &\wedge& y=y'+1 &\wedge& z=z'   &)\\
%    & \vee           (& x = x'+1 &\wedge& y=y'-1 &\wedge& z=z'   &).
%  \end{array}
%\]
We say that $\tau$ is {\em connected} whenever $G_\tau$ is connected.
A {\em Rikudo game} is a connected finite subset of cells $\tau$ and,
with $n=|\tau|$ and $[n]=\{1,\dots,n\}$,
\begin{itemize}
  \item a partial injective map $m : \tau \to [n]$ (numbers already placed),
  \item a subset of the adjacency relation $p \subseteq \adj$ (given adjacencies).
\end{itemize}
A {\em solution} is a bijective map $s : \tau \to [n]$ such that,
with symmetric binary relation $\adjs{s}$ on $\cells$ defined as $c \,\adjs{s}\, c' \iff |s(c)-s(c')|=1$,
\begin{itemize}
  \item $\adjs{s} \subseteq \adj$
    (the sequence of numbers forms a Hamiltonian path in $G_\tau$),
  \item $s(c) = m(c)$ for all cells $c$ in the domain of $m$
    (respect given numbers),
  \item $p \subseteq \adjs{s}$
    (respect given adjacencies).
\end{itemize}
See some examples on Figure~\ref{fig:rikudo}.

\begin{figure}
  \centerline{
    \includegraphics{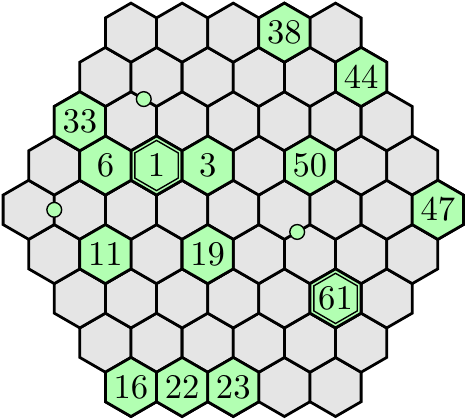}
    \hspace*{.5cm}
    \includegraphics{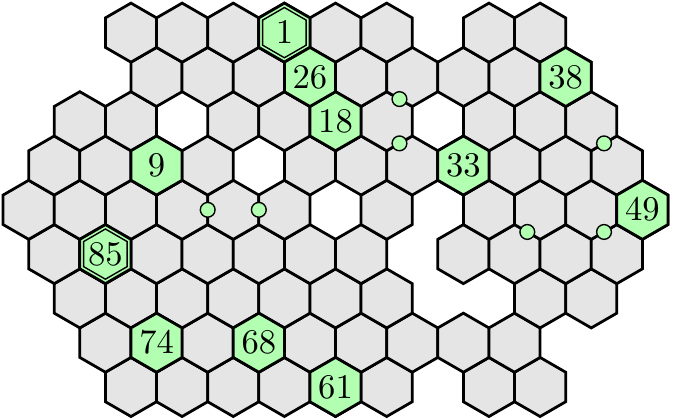}
  }
  \caption[test]{
    Two examples of {\em Rikudo} games.
    Player is asked to complete a Hamiltonian path of numbers ($\adjs{s}$)
    on a subset of cells ($\tau$) from the hexagonal grid
    (from $1$ to $n=61$ on the left, from $1$ to $n=85$ on the righ),
    given some numbers already placed ($m$),
    following edge-to-edge adjacencies ($\adj$),
    and respecting the ones indicated by a circle ($p$).
    Solutions in Appendix~\ref{a:solutions}.
  }
  \label{fig:rikudo}
\end{figure}

The main problems we are interested in are:\\[.5em]
\decisionpb{Rikudo}
{a game $(\tau,m,p)$.}
{does it admit a solution?}\\[.5em]
\decisionpb{Rikudo without holes}
{a game $(\tau,m,p)$ such that $G_{\cells \setminus \tau}$ is connected.}
{does it admit a solution?}\\[.5em]
%Observe that the condition $G_{\cells \setminus \tau}$ connected means that
%the surface of the plane induced by the cells of $\tau$ has genus $0$,
Observe that both problems are trivially in $\NP$. 

%%%%%%%%%%%%%%%%%%%%%%%%%%
\section{$\NP$-hardness}
\label{s:hardness}

Both reductions will be made from a closely related problem, namely of
deciding the existence of a Hamiltonian cycle in a hexagonal grid graph.
In order to avoid ambiguity, we call the six ``vertices'' of a hexagonal cell, its six {\em corners}.
A {\em hexagonal grid graph} is given by
a vertex set that is a subset of corners from a unit side length regular hexagonal tiling of the plane,
and whose edge set connects vertices that are one unit apart.
The problem is called {\bf Hamiltonian circuits in hexagonal grid graphs} ({\bf HCH}),
known to be $\NP$-complete~\cite{imnrx07}.
Without loss of generality we will consider that the instances of {\bf HCH}
have only vertices of degree two and three (no vertex of degree one).

\begin{theorem}
  \label{theorem:holes}
  {\bf Rikudo} is $\NP$-hard,
  even when $p=\emptyset$
  and $m$ has domain $\dom(m)=\{1,n\}$.
\end{theorem}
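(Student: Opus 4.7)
The plan is to reduce from \textbf{HCH}. Given an instance $G=(V,E)$ with only degree-$2$ and degree-$3$ vertices, I would construct in polynomial time a Rikudo game $(\tau,m,p)$ with $p=\emptyset$ and $\dom(m)=\{1,n\}$ such that $(\tau,m,p)$ is solvable iff $G$ admits a Hamiltonian cycle. To turn the cycle problem into a path problem, I fix one edge $\{u^\star,v^\star\}\in E$ (taken, say, on the outer face of a fixed planar drawing of $G$) and place $m(c_1)=1$, $m(c_n)=n$ on cells $c_1,c_n$ inside the vertex gadgets representing $u^\star$ and $v^\star$, so that any Hamiltonian path from $c_1$ to $c_n$ is forced to simulate a Hamiltonian cycle of $G$ with the edge $\{u^\star,v^\star\}$ deleted.

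The construction replaces each vertex of $G$ by a small \emph{vertex gadget} (a cluster of hexagonal cells) and each edge by an \emph{edge gadget} (a corridor of cells joining the two endpoint gadgets), routed through the hexagonal tiling using only segments along the three natural axes and $60^\circ$ bends, with sufficient spacing so that distinct gadgets are adjacent only at the intended interfaces. The gadgets are to be designed so that: (i)~a degree-$2$ vertex gadget forces both incident edge gadgets to be traversed; (ii)~a degree-$3$ vertex gadget forces exactly two of its three incident edge gadgets to be traversed; (iii)~each edge gadget admits exactly two legal behaviors, either \emph{used} (the path enters at one end, visits all its cells once, and leaves at the other end) or \emph{unused} (the path does not cross it, yet all its cells are covered by a ``U-turn'' grafted into one of the two endpoint vertex gadgets).

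Once these gadgets are in hand, correctness follows by local-to-global reasoning. Any solution to the Rikudo instance selects a set of used edge gadgets at each vertex gadget, and the degree constraints force this selection to form a $2$-regular spanning subgraph of $G$; since the sequence of cell numbers is a single path from $c_1$ to $c_n$, this subgraph is connected, hence a Hamiltonian cycle of $G$ (in which the edge $\{u^\star,v^\star\}$ is reconstructed by the endpoint placement). Conversely, a Hamiltonian cycle of $G$ through $\{u^\star,v^\star\}$ dictates which edge gadgets are used, and assembling the corresponding gadget traversals yields a bijective $s:\tau\to[n]$ with $\adjs{s}\subseteq \adj$ and $s(c_1)=1$, $s(c_n)=n$.

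The main obstacle is the degree-$3$ vertex gadget: it must enforce a ``choose $2$ of $3$'' outcome while simultaneously absorbing the U-turn of whichever of its three incident edge gadgets is unused, without leaving cells unvisited or forcing a visit twice. Cell parities have to be consistent across the three possible choices, and the gadget geometry must accommodate any local configuration of incident-edge directions compatible with \textbf{HCH} instances. Once this gadget is pinned down, the remaining work---designing the degree-$2$ and edge gadgets, checking the global layout, and fixing the parity of $n$---is a comparatively routine verification that reduces to a small enumeration of local traversal patterns.
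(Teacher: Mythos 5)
You have two concrete problems. First, the heart of your reduction---the degree-$3$ vertex gadget that must simultaneously enforce a ``choose $2$ of $3$'' behaviour and absorb, as a U-turn, \emph{all} the cells of whichever incident corridor is unused (recall that every cell of $\tau$ must receive a number, so an unused edge gadget cannot simply be skipped)---is exactly the step you defer. Nothing in the proposal indicates how to realize such a gadget on the hexagonal grid, nor that parities and interface orientations can be made consistent across the three possible choices and all local edge directions; as it stands this is a plan rather than a proof, and it is incomplete precisely at its only nontrivial point. Second, the endpoint placement is incorrect as stated: putting $1$ and $n$ in the gadgets of the endpoints of an arbitrarily fixed outer-face edge $\{u^\star,v^\star\}$ makes your instance solvable only if $G$ has a Hamiltonian cycle \emph{through that particular edge}, and a graph can be Hamiltonian while every Hamiltonian cycle avoids your chosen edge, so the direction ``$G$ Hamiltonian $\Rightarrow$ game solvable'' fails. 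The correct anchoring is an edge incident to a degree-two vertex, since every Hamiltonian cycle must use both edges at such a vertex; this is exactly how the paper chooses the cells carrying $1$ and $n$.

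More fundamentally, the gadget machinery is unnecessary for this theorem. The paper's proof observes that a hexagonal grid graph $H$ (flat orientation, scaled by $\sqrt{3}$) can be overlaid on the pointy grid so that its vertices sit at cell centers and two vertices are adjacent in $H$ exactly when the corresponding cells share an edge, with no spurious adjacencies created; hence $H=G_\tau$ for $\tau$ the set of cells hosting vertices of $H$. The reduction is then immediate: take $p=\emptyset$ and set $m(t)=1$, $m(t')=n$ on adjacent cells $t \,\adj\, t'$ with $t$ of degree two in $G_\tau$. A solution is a Hamiltonian path from $t$ to $t'$, which closes into a Hamiltonian cycle via the edge $\{t,t'\}$, and conversely any Hamiltonian cycle must use $\{t,t'\}$ and so yields a solution by deleting that edge and numbering along the path. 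Pursuing your construction would in effect amount to re-proving the hardness of \textbf{HCH} with gadgets; for this statement the direct identification is both shorter and avoids all the verifications you leave open.
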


\begin{proof}
  The reduction from {\bf HCH} is almost trivial,
  by noting that any hexagonal grid graph $H$ equals some $G_\tau$.
  Note that the vertices of $H$ correspond to corners of the hexagonal cells,
  whereas the vertices of $G_\tau$ correspond to the cells themselves.
  To align two such graphs, one may consider the graph $H$
  from the hexagonal grid with flat orientation,
  scale up $H$ by a factor of $\sqrt{3}$,
  and align its vertices with the center of cells in
  the pointy hexagonal grid to play {\em Rikudo}.

  \begin{figure}
    \centerline{
      \includegraphics{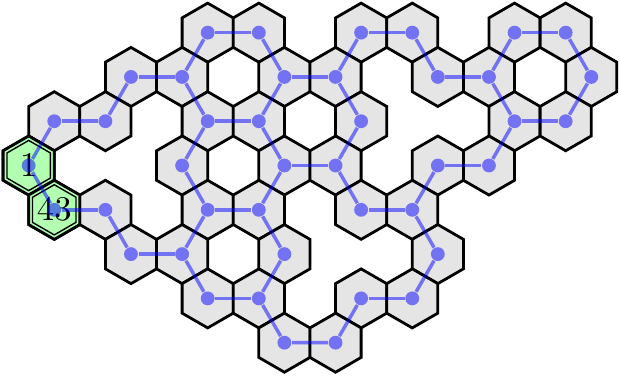}
    }
    \caption{
      Reduction from {\bf HCH} (graph in blue) to {\bf Rikudo}, with $n=43$.
    }
    \label{fig:holes}
  \end{figure}

  The reduction (see Figure~\ref{fig:holes})
  then simply consists in the game $(\tau,m,p)$
  with $\tau$ the set of cells hosting a vertex of $H$,
  $p=\emptyset$, and $m$ defined on only two cells $t,t'\in\tau$, chosen
  such that $t \,\adj\, t'$ and $t$ has degree $2$ in $G_\tau$
  (such a cell always exists), as
  $m(t)=1$ and $m(t')=n$ where $n$ is the number of vertices in $H$.
  We have $H=G_\tau$ and the Hamiltonian path of {\em Rikudo} is enforced to be a cycle by $m$,
  hence the {\bf HCH} and {\bf Rikudo} instances are identical.
\end{proof}

%% (BEGIN COMMENT)
%% alternative proof where each vertex of H is replaced with cells.
\COMMENT{
\begin{proof}
\textcolor{red}{alternative proof}\\
Let a hexagonal grid graph $H$ scaled up by a factor of $\sqrt{3}$ for conveniences.
The reduction from HCH to Rikudo is the following (see Figure \ref{fig:holes} for an example):
\begin{itemize}
\item a set $\tau$ is obtained by replacing each vertex $v$ of $H$ to a hexagon cell whose side is 1 where $v$ is the center of the cell. By the construction, any two adjacent vertices $u, v$ is performed by two adjacent hexagon cells (where both share a common side),
\item let $p = \emptyset$,
\item $m$ defined on only two cells $t,t'\in\tau$, chosen
  such that $t \adj t'$ and $t$ has degree $2$ in $G_\tau$
  (such a cell always exists), as
  $m(t)=1$ and $(t')=n$ where $n$ is the number of vertices of $H$.
\end{itemize}
\end{proof}
}
%% (END COMMENT)

The restriction on $p$ and $m$ in Theorem~\ref{theorem:holes}
motivates the consideration of {\bf Rikudo without holes},
which seems closer to the spirit of the ``real'' {\em Rikudo} game
and requires more advanced constructions.

\begin{theorem}
  \label{theorem:withoutholes}
  {\bf Rikudo without holes} is $\NP$-hard.
\end{theorem}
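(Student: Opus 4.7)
The plan is to reduce again from \textbf{HCH}, extending the construction of Theorem~\ref{theorem:holes} so as to eliminate the holes. Starting from an instance $H$ of \textbf{HCH}, I would first embed $H$ as a ``skeleton'' of cells in the hexagonal grid (just as before, via the $\sqrt{3}$-scaling trick), but at a much larger scale so that every edge of $H$ corresponds to a long chain of cells rather than a single adjacency. I would then enclose this skeleton in a simply connected hexagonal region and fill every remaining cell to form $\tau$. Since $\cells \setminus \tau$ is then a single unbounded region, the instance has no holes. The forced adjacencies $p$ will be used to ensure that, despite the many newly added cells, the only Hamiltonian paths correspond to Hamiltonian cycles of $H$.

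The construction splits into three types of regions. \emph{Vertex regions} represent vertices of $H$ by small gadgets whose external connections (via $p$) force the Hamiltonian path to enter and exit along exactly two of the incident edge chains. \emph{Edge regions} are the chains corresponding to edges of $H$; forced adjacencies $p$ make each chain traversed monolithically from one endpoint to the other. Finally, \emph{filler regions} are the remaining cells, equipped with forced adjacencies $p$ that pin down a unique Hamiltonian sub-path with two prescribed port cells on the boundary. Any global Hamiltonian path must then glue these sub-paths together by routing through skeleton components in a way that encodes a Hamiltonian cycle of $H$. As in Theorem~\ref{theorem:holes}, picking two adjacent skeleton cells $t,t'$ with $t$ of degree $2$ in the effective graph and setting $m(t)=1$, $m(t')=n$ closes the skeleton traversal, so that the \textbf{Rikudo} instance admits a solution if and only if $H$ admits a Hamiltonian cycle.

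The main difficulty lies in the design and verification of the filler gadgets. Each such gadget must simultaneously (i) tile into the possibly irregular shape left by the skeleton, (ii) admit a \emph{unique} Hamiltonian sub-path between its two ports once the forced adjacencies are taken into account, (iii) forbid shortcuts through neighboring regions that would bypass skeleton vertices, and (iv) respect the bipartition of the hexagonal cell grid so that the overall cell count parity is compatible with the existence of a Hamiltonian path. A natural way to attack (ii) is to impose enough forced adjacencies that the filler region becomes, after contracting each forced pair into a single super-vertex, a ``thin'' graph (e.g.\ a tree-like or snake-like structure) whose Hamiltonian path between the two ports is manifestly unique. Once such a library of gadgets and interface rules is in place, the remainder of the reduction is a straightforward polynomial-time construction analogous to Theorem~\ref{theorem:holes}.
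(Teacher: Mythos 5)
Your overall strategy is the same as the paper's: embed $H$ as a skeleton of cells, fill in the surrounding cells so that $\cells\setminus\tau$ becomes connected, and use the forced adjacencies $p$ to route the Hamiltonian path deterministically through the filler so that the only remaining freedom encodes a Hamiltonian cycle of $H$. But the proposal stops exactly where the real work begins. The step you defer --- ``once such a library of gadgets and interface rules is in place'' --- is the core of the proof, not a routine verification. The filler regions are not gadgets you get to design freely: their shapes are dictated by the faces of the embedded graph $H$ and vary with the instance, so you must prove that \emph{every} such region admits a forced Hamiltonian subpath whose two endpoints plug into the skeleton at a single, controlled place, and that attaching this forced detour (i) is always compatible with \emph{any} Hamiltonian cycle of $H$ (completeness) and (ii) never lets the path shortcut through the filler to bypass a vertex or reuse an edge of $H$ (soundness). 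The paper resolves this with a specific construction: a fixed scale ($2\sqrt{7}$, rotation by $\arctan\frac{1}{3\sqrt{3}}$) so that each hole decomposes into regular 6-cell v-components and 13-cell h-components plus e-cells, a spanning tree of these components, explicit h- and v-basepaths glued by local flips, and a ``canonical access'' per hole guaranteeing each vertex gadget carries at most one forced attachment --- after which the game is shown equivalent to a clean instance $(\tau',m',p')$ whose correctness argument mirrors Theorem~\ref{theorem:holes}. None of this is supplied or replaced by an alternative argument in your proposal, and your suggestion that forced adjacencies can always make the contracted filler ``manifestly'' have a unique port-to-port Hamiltonian path is precisely the claim that needs a construction and a proof.

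Two smaller points. First, your concern (iv) about respecting ``the bipartition of the hexagonal cell grid'' rests on a false premise: the cell-adjacency graph $G_\cells$ is a triangular lattice, which contains triangles and is not bipartite, so there is no such parity obstruction (the bipartite graph in \textbf{HCH} lives on the corners, not the cells). Second, your choice to fill the entire bounding hexagon is harmless but unnecessary for the statement as defined (only the finite components of $\cells\setminus\tau'$ must be absorbed), and it enlarges the irregular filler you must control, making the missing gadget analysis harder rather than easier; the paper's scale is chosen exactly so that the holes have a rigid, analyzable structure.
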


\begin{proof}
  Let $H$ be an instance of {\bf HCH}.
  Without loss of generality we can consider that the vertices of $H$ have degree two or three
  (hexagonal grid graphs have degree at most three,
  and if $H$ has a vertex of degree one then
  it is trivially impossible to have a Hamiltonian cycle).
  We construct the game $(\tau,m,p)$ as follows (see Figure~\ref{fig:withoutholes}).
  First, take $H$ with flat orientation,
  scale it up by a factor of $2\sqrt{7}$,
  rotate it counterclockwise by an angle of $\arctan\frac{1}{3\sqrt{3}}$,
  and align its vertices with the common corners of three cells
  in the pointy hexagonal grid to play {\em Rikudo}.
  We have that each vertex $v$ of $H$ is at the corner of three cells $c_1(v),c_2(v),c_3(v)$ of the game,
  and that the middle of each edge $e$ of $H$ is at the center of a cell $c(e)$ of the game.
  Let $c_4(v)$ (respectively $c_5(v)$; $c_6(v)$) denote the cell adjacent
  to both $c_1(v),c_2(v)$ (respectively $c_2(v),c_3(v)$; $c_3(v),c_1(v)$)
  which is not $c_3(v)$ (respectively $c_1(v)$; $c_2(v)$).
  For all $v$ the six cells $c_1(v),\dots,c_6(v)$
  form an upward or downward triangle.
  Consider the set of cells $\tau'$ obtained by:
  \begin{itemize}
    \item for each vertex $v$ of $H$,
      add $c_1(v),c_2(v),c_3(v),c_4(v),c_5(v),c_6(v)$ to $\tau'$,
      and call this subset of six cells a {\em vertex gadget};
    \item for each edge $e$ of $H$,
      add the cell $c(e)$ to $\tau'$,
      and call this cell an {\em edge gadget}.
  \end{itemize}
  The cells from each finite connected component of $G_{\cells \setminus \tau'}$
  is called a {\em hole} of $\tau'$.
  The set $\tau$ is obtained from $\tau'$ by adding
  all the holes of $\tau'$.
  This ensures that $G_{\cells \setminus \tau}$ is connected.

  \begin{figure}
    \centerline{
      \includegraphics{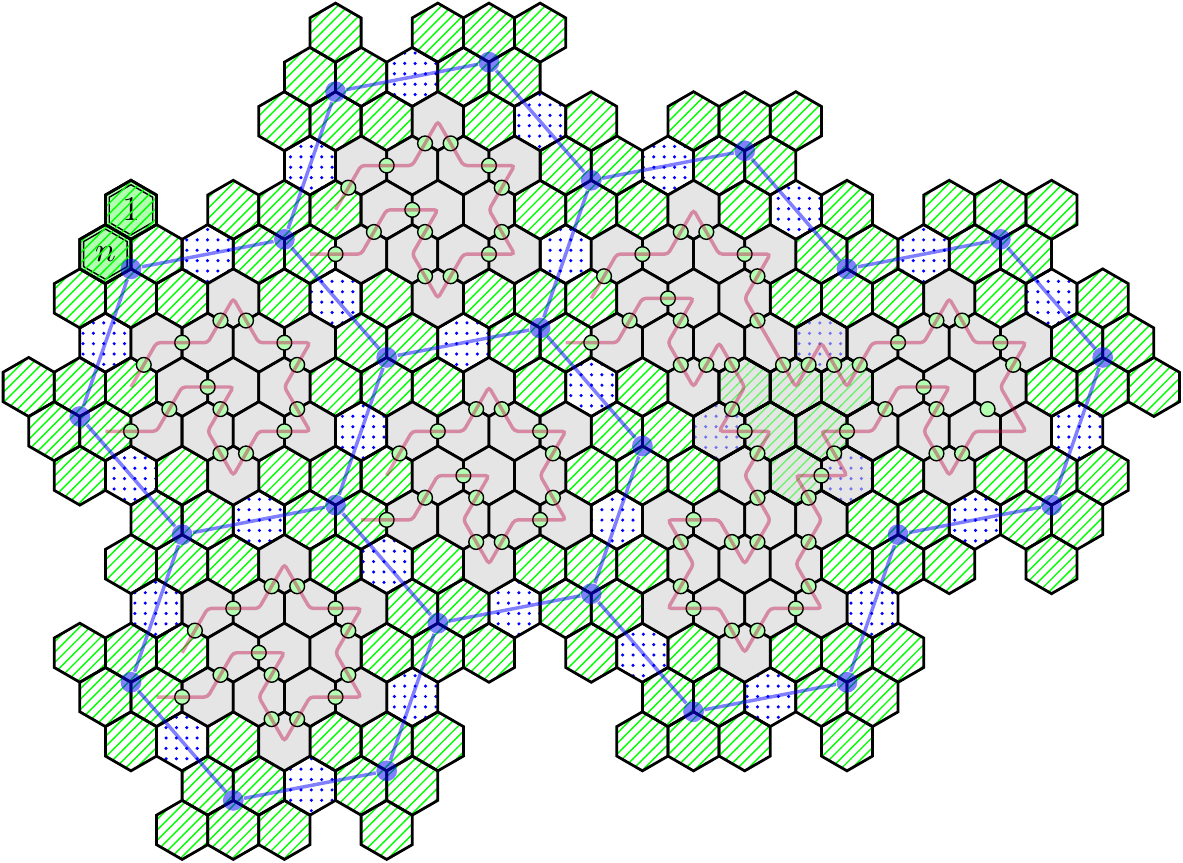}
    }
    \caption{
      Reduction from {\bf HCH} (graph in blue) to {\bf Rikudo without holes}.
      Vertex gadgets are hatched in green,
      edge gadgets are dotted in blue, holes are filled in grey
      (with the absent vertex and edge gadgets slightly marked)
      and the Hamiltonian paths of holes are highlighted in purple.
      In this example, $n=272$.
    }
    \label{fig:withoutholes}
  \end{figure}

  The set of adjacencies $p$ is build from a Hamiltonian path
  on each hole of $\tau'$, which requires some additional definitions
  (see Figure~\ref{fig:hole}).
  The holes of $\tau'$ are made of three types of cells:
  {\em v-cells} which may have been part of a vertex gadget,
  {\em e-cells} which may have been part of an edge gadget,
  and {\em h-cells} the remaning cells.
  For a letter $\text{x} \in \{\text{v},\text{h}\}$ we call x-component
  a connected component of x-cells in $G_{\tau''}$.
  Note that each v-component has 6 cells, and each h-component has 13 cells.
  Given a hole $\tau''$ of $\tau'$, we define the graph $H_{\tau''}$
  whose vertices are the v- and h-components of $G_{\tau''}$,
  and such that two components are adjacent when two of their cells are.
  Remark that $H_{\tau''}$ is a bipartite connected graph,
  since we have not considered e-cells.
  Now we attach each e-cell to a v-component vertex of $H_{\tau''}$
  adjacent to it, according to some map $\vartheta$ from the
  e-cells to the v-components of $G_{\tau''}$.
  %Given a vertex $s$ of $H_{\tau''}$ which is a v-component,
  %we denote $\vartheta^{-1}(s)=\{ t \mid \vartheta(t)=s \}$
  %the e-cells of $\tau''$ attached to it.
  %
  %Given a vertex $s$ of $H_{\tau''}$, it makes sense
  %to consider $H_{\tau'' \setminus s}$ because $s$ is a v- or h-component.
  %
  For any vertex $v$ of $H$ and the associated vertex gadget,
  we call each of the following couples of cells an {\em access}:
  $(c_1(v),c_6(v))$, $(c_2(v),c_4(v))$, $(c_3(v),c_5(v))$.
  We associate to $\tau''$ a set of {\em adjacent accesses}, which are the
  vertex gadget's access whose cells are adjacent to cells of $\tau''$.
  Among these, the {\em canonical access} of $\tau''$ is the maximal one
  according to some fixed direction ({\em e.g.} the leftmost in our figures).
  Accesses will be used to plug partial paths together,
  and we extend their definition to the v-components of $\tau''$.
  Finally, given $\tau''$ and an adjacent access $\alpha$,
  let $T_{\tau''}^\alpha$ be a rooted spanning tree of $H_{\tau''}$,
  whose root is the h-component of $\tau''$ adjacent to $\alpha$
  (observe that there is a unique such h-component).
  Given a vertex $t$ of $H_{\tau''}$ which is a h-component,
  let $\tau''[t]$ be the subset of $\tau''$ corresponding to
  the subtree of $T_{\tau''}^\alpha$ rooted at $t$.
  %and that $H_{\tau''}$ always admits spanning trees which are binary
  %thanks to the v-component vertices all having degree three)
  %
  We build the {\em Hamiltonian path $P_{\tau''}^\alpha$ along
  $T_{\tau''}^\alpha$}, from one of the access cell of $\alpha$ to the other,
  recursively as follows.
  \begin{itemize}
    \item For the root $r$ of $T_{\tau''}^\alpha$, which is a h-component, 
      build the {\em h-basepath} depicted on the left of Figure~\ref{fig:basepath},
      rotated in order to fit the access $\alpha$.
    \item For each child $s$ of $r$, which is a v-component, build the
      {\em v-basepath} depicted on the right of Figure~\ref{fig:basepath},
      including its associated e-cells $t$ such that $\vartheta(t)=s$.
    \item For each child $t$ of $s$, consider the access $\beta$ of $s$
      to which $t$ is adjacent, and build recursively a Hamitlonian path
      $P_{\tau''[t]}^\beta$ along the subtree $T_{\tau''[t]}^\beta$.
    \item Finally, plug the pieces together at accesses, by operating
      the flips illustrated on Figure~\ref{fig:flips}.
      Remark that by the construction of h-basepath and v-basepath,
      these flips are always possible.
  \end{itemize}
  Now consider, for each hole $\tau''$ of $\tau'$, the path $T_{\tau''}^\alpha$
  with $\alpha$ the canonical access of $\tau''$, and add the adjacencies
  of this path to $p$.

  \begin{figure}
    \centerline{\includegraphics{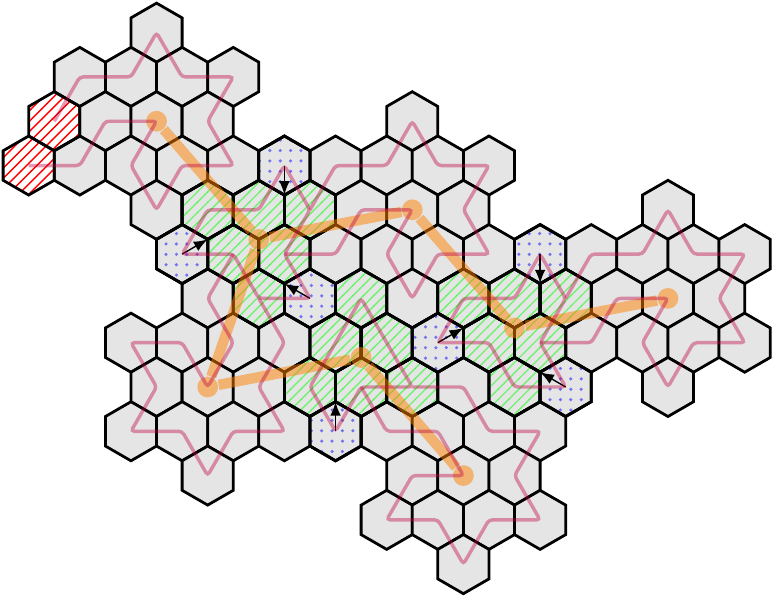}}
    \vspace*{-1cm}
    \centerline{\includegraphics{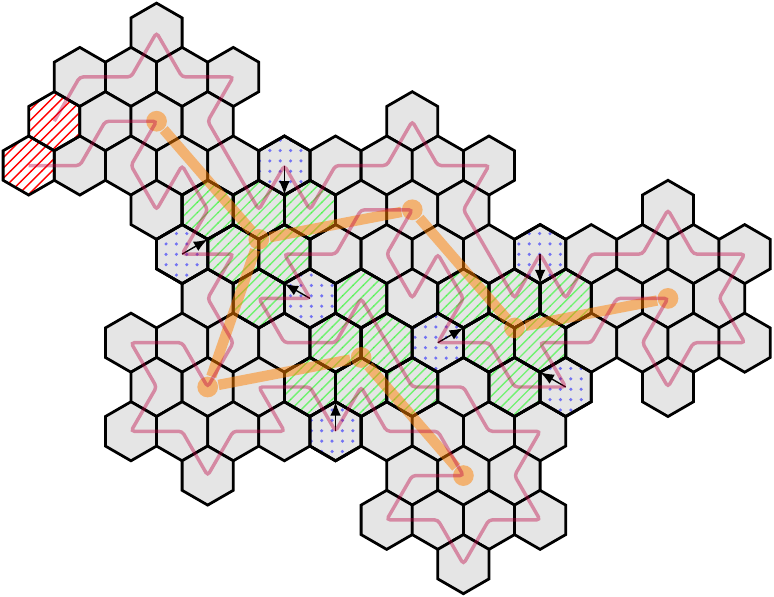}}
    \vspace*{-1cm}
    \centerline{\includegraphics{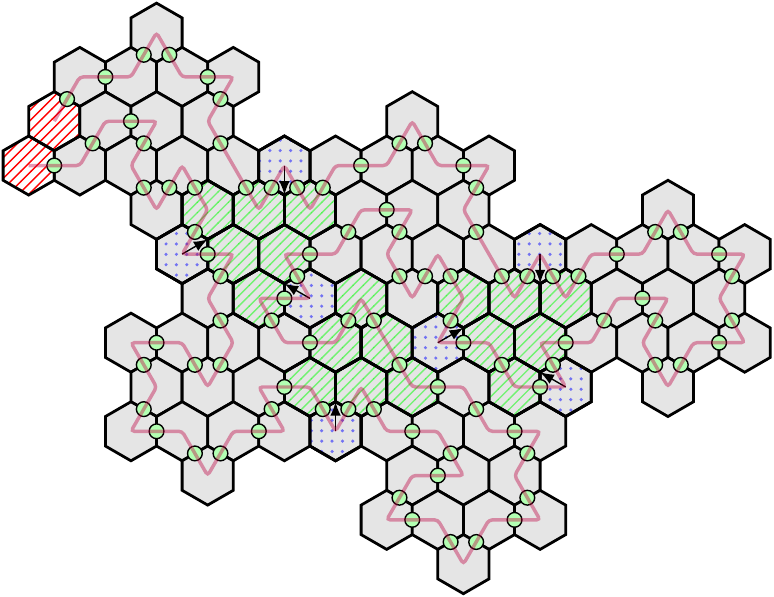}}
    \caption{
      Example construction of $p$ from a Hamiltonian path for one hole $\tau''$.
      The function $\vartheta$ is illustrated on each e-cell $t$
      with an arrow pointing towards the v-component $\vartheta(t)$,
      v- e- and h-cells have slightly marked patterns,
      v- and h-components hold an orange node of $H_{\tau''}$,
      edges of a spanning tree $T_{\tau''}^\alpha$ are drawn in orange
      with the canonical access $\alpha$ hatched in red.
      Top: the h- and v-basepaths before the flips.
      Middle: the Hamiltonian path $P_{\tau''}^\alpha$ obtained after the flips.
      Bottom: the corresponding adjacencies in $p$.
    }
    \label{fig:hole}
  \end{figure}

  \begin{figure}
    \centerline{
      \includegraphics{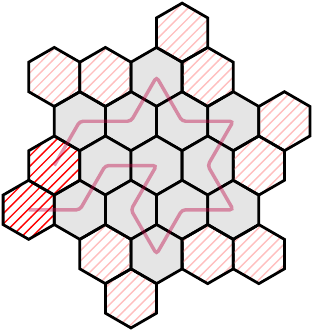}
      \hspace*{2cm}
      \includegraphics{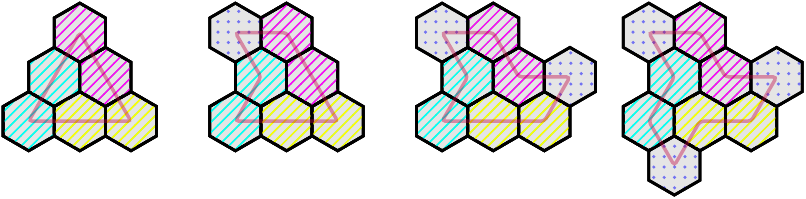}
    }
    \caption{
      Left: in purple the h-basepath according to the (canonical) access hatched in red,
      other accesses are hatched in pink.
      Right: in purple the v-basebaths according to the number of e-cells attached to the
      v-component, the three accesses are hatched in three different colors.
    }
    \label{fig:basepath}
  \end{figure}

  \begin{figure}
    \centerline{\includegraphics{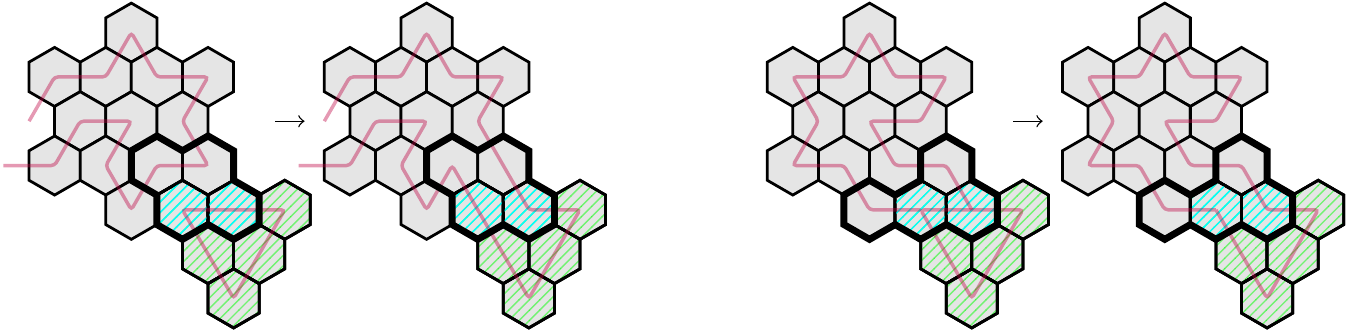}}
    \caption{
      Flips operated in order to connect h-basepath and v-basepath
      (on a downward triangle without e-cell in this example),
      the h-component cells are filled in grey and the v-component access in
      hatched in cyan. The location of the flip is highlighted.
      Left: from a h-basepath parent to a v-basepath child.
      Right: from a v-basepath parent to a h-basepath child.
    }
    \label{fig:flips}
  \end{figure}

  For $m$, choose an access $(t,t')$ of some vertex gadget corresponding
  to a vertex of $H$ of degree two, such that
  $t,t'$ do not appear in $p$, and set $m(t)=1$, $m(t')=n$
  with $n$ the total number of cells in $\tau$.

  \medskip

  If a cell belongs to two elements of $p$, {\em i.e.} has two enforced adjacencies,
  then any Hamiltonian path from $1$ to $n$ crosses this cell along these adjacencies.
  Thus, as a consequence of the definition of $p$,
  the game $(\tau,m',p)$ is equivalent (in terms of decision) to the game
  $(\tau',m',p')$,
  where $m'$ is the same as $m$ except that $m(t')=n'$ with $n'$ the total number of cells in $\tau'$, and
  where $p'$ are the couples of canonical access cells in
  vertex gadgets for each hole of $\tau'$ (see Figure~\ref{fig:withoutholes-prime}).

  \begin{figure}
    \centerline{
      \includegraphics{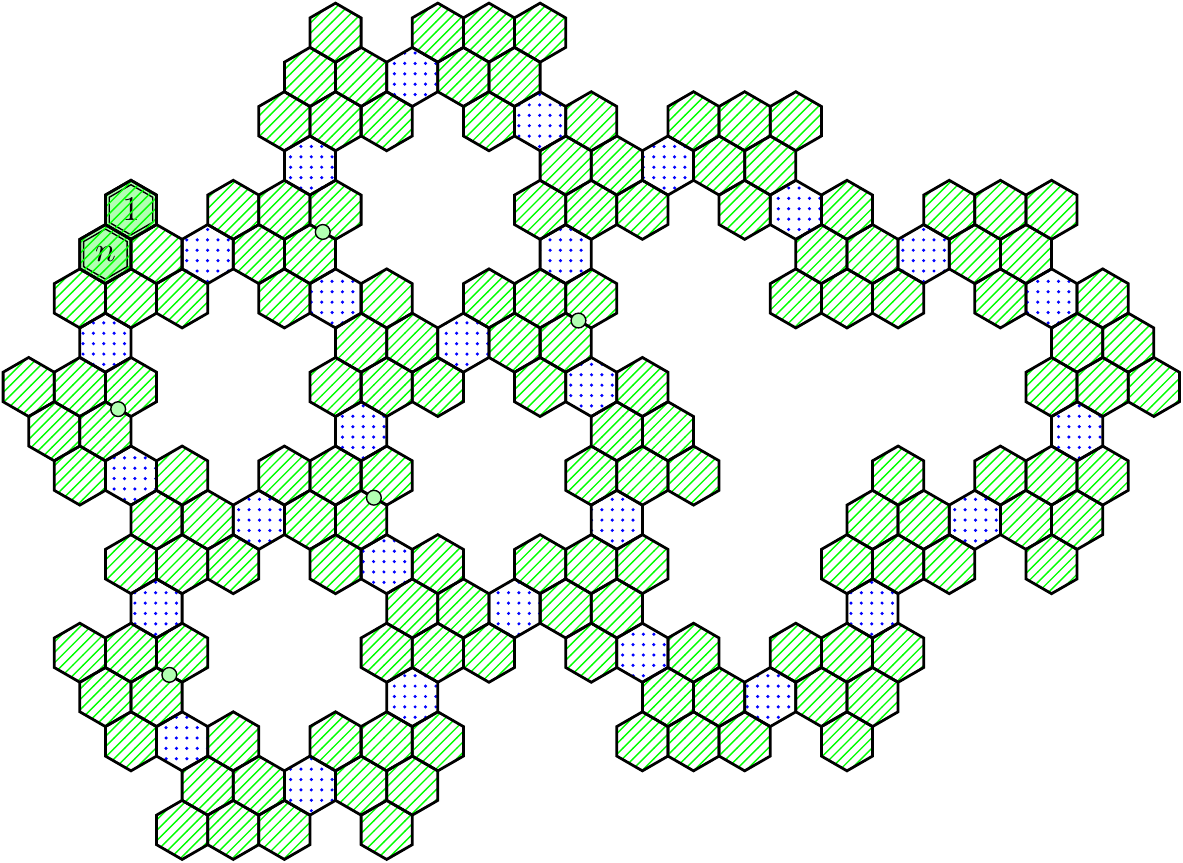}
    }
    \caption{
      Game $(\tau',m',p')$ equivalent to $(\tau,m,p)$ from Figure~\ref{fig:withoutholes}.
    }
    \label{fig:withoutholes-prime}
    \vspace*{.5cm}
    \centerline{
      \includegraphics{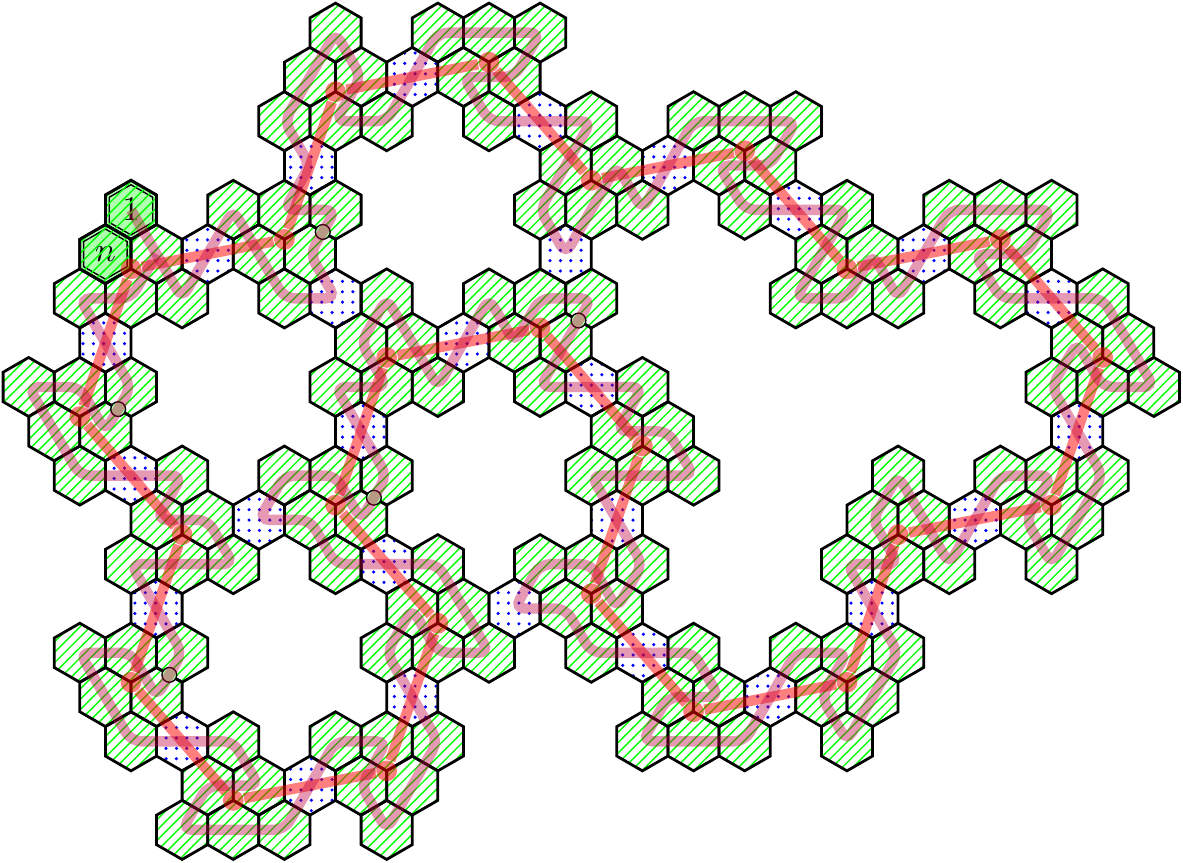}
    }
    \caption{
      A solution to the game $(\tau',m',p')$
      with only the Hamiltonian path from $1$ to $n$ depicted in purple,
      and the corresponding Hamiltonian cycle on $G$ in red.
    }
    \label{fig:withoutholes-prime-solution}
  \end{figure}

  \medskip

  Now let us argue that $H$ has a Hamiltonian cycle if and only if the game $(\tau',m',p')$ has a solution.
  If $H$ has a Hamiltonian cycle, then it is straightforward to construct a solution
  to the game $(\tau',m',p')$ respecting the adjacencies given by $p$,
  as shown on Figure~\ref{fig:withoutholes-prime-solution},
  by starting from the vertex gadget hosting numbers $1,n$ and following the Hamiltonian cycle on $H$:
  for each vertex gadget, construct a Hamiltonian path from on edge gadget to the other,
  and include the third edge gadget to the path if it has not been included so far
  (recall that the vertices of $H$ have degree two or three);
  the vertex gadget hosting numbers $1,n$ has a special pattern
  (it always corresponds to a vertex of $H$ of degree two).
  Note that the use of canonical accesses ensures that each vertex gadget has
  at most one access in $p'$.
  All the possibilities are presented on Figure~\ref{fig:wihtoutholes-solutions}.

  \begin{figure}
    \centerline{\includegraphics{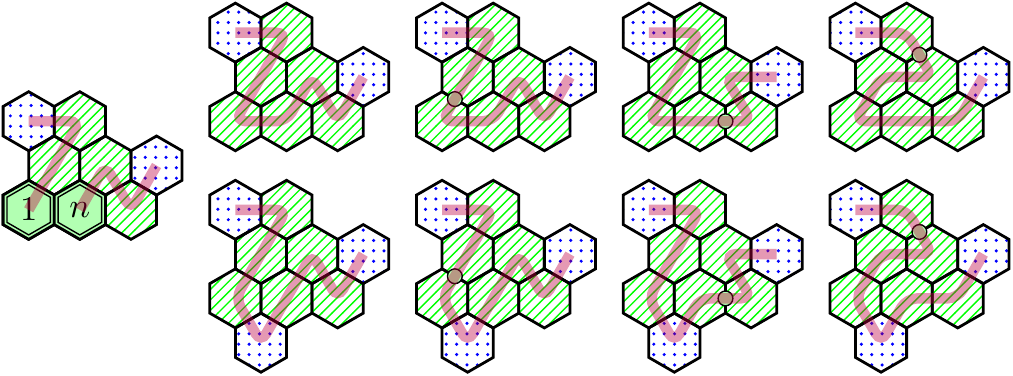}}
    \caption{
      Hamiltonian path for all the possibilities of vertex and edge gadgets combination
      (up to rotation), in order to build a solution to the instance $(\tau',m',p')$
      from a Hamiltonian cycle on $H$.
    }
    \label{fig:wihtoutholes-solutions}
  \end{figure}

  \medskip

  If the game $(\tau',m',p')$ has a solution $s$,
  then the order of vertex gadgets induced by $\adjs{s}$ gives a Hamiltonian path in $H$:
  \begin{itemize}
    \item it is only possible to go from one vertex gadget to another vertex gadget
      by going over an edge gadget, hence adjacencies of $H$ are respected,
    \item an edge gadget consists of only one cell, thus we cannot use twice an edge of $H$, and the degree of every vertex in $H$ is two or three; it follows that we cannot visit twice a vertex of $H$.
  \end{itemize}
  Since a solution to the game includes all cells ({\em i.e.} all vertex gadgets)
  and goes back to the starting vertex gadget thanks to the positions of $1$ and $n$
  given by $m'$, we conclude that the corresponding cycle on $H$ is Hamiltonian.
\end{proof}

%% (BEGIN COMMENT)
%% first version of the construction (with claim)
\COMMENT{  
  Observe that a cell in the flat orientation of {bf HCH} associates to six vertex gadgets, six edges and a region inside them, then all these is called \textit{H-cell gadget} and the region inside is \textit{H-cell area}. For convenience, we call for any $v$, each couple of cells $(c_1(v), c_4(v)), (c_2(v), c_6(v)), (c_3(v), c_5(v))$ is an \textit{access} in the meaning of approaching from an H-cell area to another through an access, and then an access can approach to another one in the same vertex gadget.

%  \begin{claim}
%  If a hole of $\tau'$ has a Hamiltonian path from/to an access of a vertex gadget, then the hole has a Hamiltonian path from/to an access of any vertex gadget which is adjacent to it.
%  \end{claim}\label{claim:any_access}
%  \begin{subproof}
%  Under the hypothesis, we denote by $\hbar$ the hole of $\tau'$ and $ \varrho(v)$ (some $v \in H$) the vertex gadget. Obviously, $\varrho(v)$ is adjacent to $\hbar$.\\
%  The claim is equivalent to prove that there is a Hamiltonian path from/to an access of a vertex gadget, say $\varrho(u)$ such that $u$ is $v$ in $H$.
%Indeed, assume that the Hamiltonian path has a pair of endpoints which is an access of $\varrho(v)$, $(c_i(v), c_j(v))$, then   
%  
%  \end{subproof}
  \begin{claim}
    \label{claim:hole}
    Every hole of $\tau'$ contains a Hamiltonian path from/to an access of a vertex gadget which is adjacent to the hole.
  \end{claim}
  \begin{subproof}
  Let $\hbar$ a hole of $\tau'$.
    We will prove by induction on the number of H-cell area of hole $\hbar$, denoted by $|\hbar|$. \\
    First, the base case is $|\hbar| = 1$ an H-cell area has a Hamiltonian path from/to any access of an adjacent vertex gadget (see Figure ?? for an example).\\
    Assume that there is a stated Hamiltonian path for a hole with $t$ H-cell areas. 
    Now, we will prove that it is also true for $\hbar$ with $t+1$ H-cell areas. Let a vertex gadget $\varrho(v)$ adjacent to $\hbar$, and let an H-cell gadget $\mathbb{G}$ containing this vertex gadget. If we remove the H-cell gadget except some of its vertex gadgets which bound the sub-hole after the deletion, then we obtain either a sub-hole $\hbar^*$ of $t$ H-cell areas or two sub-holes whose sum H-cell areas is $t$. \\
    For the first case, assume that $\hbar^*$ has a Hamiltonian path by induction whose pair access, say $(a_u, b_u)$ belongs to a vertex gadget $\varrho(u) \in \mathbb{G}$. Let an access $(a_v, b_v)$ of $\varrho(v)$ which is adjacent to $\hbar$, we can actually obtain a path from 
    
    \textcolor{red}{ec ec ec, a toi...}
    
     such that $\hbar^*$ a sub-hole of $\hbar$ of $t$ H-cell areas by ignoring an H-cell area and its adjacent vertices set $\mathcal{S} \in \hbar$ including $\varrho(v)$, this $\varrho$ always exist. By induction, $\hbar^*$ has a Hamiltonian path and by Claim \ref{claim:any_access}, assume that this path is from/to an access of a vertex gadget in $\mathcal{S}$..
  \end{subproof}
  \TODO{build $p$}
}
%% (END COMMENT)

%% (BEGIN COMMENT)
%% alternative proof where each vertex of H is replaced with cells.
\COMMENT{
\begin{proof}
\textcolor{red}{alternative proof}\\
For conveniences, given an original hexagonal grid $ogr$, we construct a hexagonal grid $sgr$ as shown in Figure ??; then geometrically, it is scaled by a factor of $2\sqrt{7}$ and rotated by an angle $\tan ^{-1} \frac{1}{3\sqrt{3}}$.

Let a graph $H$ on this scaled grid be an instance of \textbf{HCH}, we construct an instance $(\tau, m, p)$ for \textbf{Rikudo} on the original hexagonal grid as follows:
 \begin{itemize}
 \item each vertex $v \in V(H)$ corresponds to $\Delta(v)$ one of two rotations of the triangle of 6 hexagonal cells (in green in figure "an edge") where $v$ is the center of this triangle,
 \item for any edge $uv \in E(H)$, we add a cell $c$ (in blue in figure "an edge") between the corresponding $\Delta(u), \Delta(v)$ (in which they are of different rotations of the triangle) such that $c$ is adjacent to two cells of each triangle.
 \item construct $p?$
 \item construct $m?$
 \end{itemize}

\end{proof}
}
%%(END COMMENT)

\begin{remark}
  It is a result of~\cite{rz00} that a h-component
  corresponds to the only finite, 2-connected, linearly convex,
  subgraph of the Archimedean triangular tiling
  (the graph obtained from the hexagonal grid cells with adajcencies $\adj$)
  which do not have a Hamiltonian cycle;
  but in our construction we only need Hamiltonian paths on h-components.
\end{remark}

\begin{remark}
  With almost trivial adaptations of the proof of Theorem~\ref{theorem:withoutholes},
  one can also obtain the $\NP$-hardness when the shape of the game is a hexagon
  (as on the left of Figure~\ref{fig:rikudo}),
  and with one missing cell at the center, such as in ``real'' rikudo games.
\end{remark}

%%%%%%%%%%%%%%%%%%%%%%%%%%
\section{Rikudo with already placed numbers}
\label{s:conditions}

The hardness proofs presented in Section~\ref{s:hardness}
reduce from Hamiltonian cycle problems,
and we (almost) do not use the already placed numbers given by $m$
(intuitively, because it would have required to have some prior knowledge on the path).
It is therefore natural to ask whether {\em Rikudo} games may become easier to solve when it is imposed
that some numbers are initially placed by $m$?
Given that the reduction leading to Theorem~\ref{theorem:withoutholes}
let $m$ have domain $\dom(m)=\{1,n\}$, we straightforwardly have that the following variant is still $\NP$-hard.
\\[.5em]
\decisionpb{$\alpha$-Rikudo without holes {\normalfont (for some real constant $0 < \alpha \leq 1$)}}
{a game $(\tau,m,p)$ such that $|\dom(m)| \geq \alpha n$ with $n=|\tau|$.}
{does it admit a solution?}\\[.5em]
Indeed, one can construct a game for some $n'$ and then do some padding,
by placing with $m$ all the numbers from $n'+1$ until $n=\lceil\frac{n'}{\alpha}\rceil$.

The following question is then of particular interest.\\[.5em]
\decisionpb{1-over-$k$-Rikudo {\normalfont (for some integer constant $k\geq 2$)}}
{a game $(\tau,m,p)$ such that $\dom(m) = \{xk+1 \mid x\in\N\} \cap [n]$ with $n=|\tau|$.}
{does it admit a solution?}\\[.5em]
In words, it imposes that $m$ places the numbers $1,k+1,2k+1,3k+1,\dots$ {\em i.e.} one number every $k$ numbers.
The total fraction $\frac{|\dom(m)|}{n}$ approaches $\frac{1}{k}$,
but contrary to {\bf $\alpha$-Rikudo} the repartition of numbers is constrained.
In the case $k=2$ all odd numbers are already placed and the problem turn out to be in $\Poly$
(Theorem~\ref{theorem:1-over-2}), whereas we prove that it is $\NP$-hard for $k=3$
(Theorem~\ref{theorem:1-over-3}).

\begin{theorem}
  \label{theorem:1-over-2}
  {\bf 1-over-2-Rikudo} is in $\Poly$.
\end{theorem}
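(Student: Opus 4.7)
The plan is to reduce \textbf{1-over-2-Rikudo} to bipartite perfect matching, which is well known to be in $\Poly$. The driving observation is that since $m$ places every odd number in $[n]$, a candidate solution is entirely determined by an assignment of the even values to the empty cells of $\tau$. For each interior even value $2i$ with $2\le 2i\le n-1$, the path constraint forces $2i$ to be placed on a common $\adj$-neighbor of the placed cells $m^{-1}(2i-1)$ and $m^{-1}(2i+1)$; the only boundary case is $2i=n$ (which happens when $n$ is even), where the value $n$ need only be adjacent to $m^{-1}(n-1)$. Conversely, any such placement automatically yields a valid Hamiltonian path, because consecutive values in the path alternate even and odd, and the common-neighbor condition supplies exactly the required adjacency between each empty cell and its two odd neighbors.

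Concretely, I would first perform a $p$-consistency precheck: any edge of $p$ between two cells of the same parity (either two already-placed odd cells, or two empty cells both carrying even values) would be a path adjacency with value-difference at least $2$, so such an edge triggers immediate rejection. Then I would build a bipartite graph $B$ whose left vertices are the even numbers in $[n]$, whose right vertices are the empty cells of $\tau$, and which has an edge $\{2i,c\}$ whenever $c$ meets the common-neighbor condition stated above. Every remaining edge $\{m^{-1}(2i-1),c\}\in p$ forces $c$'s value to lie in $\{2i-2,2i\}$, which I encode by deleting all other $B$-edges incident to $c$. By the observation above, the game admits a solution if and only if $B$ has a perfect matching, so any polynomial-time bipartite matching algorithm decides the instance. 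There is no conceptual obstacle; only mild bookkeeping is required for the endpoint $n$ (when $n$ is even) and for empty cells subject to two $p$-constraints, whose value is then completely forced (or the instance is rejected if the two forced pairs are disjoint).
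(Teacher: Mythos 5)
Your reduction is correct, and it takes a genuinely different route from the paper. The paper reduces \textbf{1-over-2-Rikudo} to \textbf{2-SAT}: it first runs unit propagation, then uses the geometric fact that two cells of the hexagonal grid share at most two common neighbors, so each remaining even number has at most two candidate cells, giving one binary ``at least one position'' clause per number plus binary conflict clauses for shared positions. You instead observe that every adjacency constraint of the path couples an even value only with its two (already placed) odd neighbors, so the even values can be placed independently up to distinctness; this is a system-of-distinct-representatives question, i.e.\ a bipartite perfect matching between even values and empty cells, with the $p$-constraints absorbed by a parity precheck and by pruning edges at cells incident to $p$. Your approach buys a few things: it does not need the ``at most two common neighbors'' property at all (so it would work verbatim on any board graph), it handles the endpoint $n$ even cleanly (there the last value can have up to six candidate cells, which your matching absorbs but which strains the paper's width-two clause, strictly speaking requiring a small extra argument or propagation step), and it makes the treatment of $p$ explicit, which the paper leaves implicit. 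What the paper's encoding buys in exchange is speed and lightness: after unit propagation the instance is a 2-CNF solvable in linear time, whereas bipartite matching is polynomial but heavier. Both arguments hinge on the same structural fact --- consecutive values alternate between placed odd cells and free even cells, so correctness of a placement is a purely local, per-even-number condition plus injectivity --- and your equivalence ``solution iff perfect matching in $B$'' is sound, including the forced-value bookkeeping for cells carrying several $p$-edges.
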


\begin{proof}
  The holes do not matter in this simple reduction to {\bf 2-SAT}.
  Given a game $(\tau,m,p)$ with all odd numbers placed,
  it is an easy observation that for any pair of already placed integers $x,x+2$
  there are at most two possible positions for the number $x+1$.
  The reduction goes as follows: start by placing all (even) numbers having a unique possible position
  until no such number exist (it corresponds to performing unit propagation).
  Then for all remaining (even) numbers $x$, create two variables $v_x^1$ and $v_x^2$
  corresponding to the two possible positions.
  Construct a formula having the following clauses:
  \begin{itemize}
    \item $v_x^1 \vee v_x^2$ for each remaining (even) number $x$,
    \item $\neg v_x^i \vee \neg v_y^j$ for each pair of variables corresponding to the same position.
  \end{itemize}
  The variable creation ensure that the numbers form a path,
  the first set of clauses ensure that all remaining numbers are placed,
  and the second set of clauses ensure that no two numbers are placed at the same position.
\end{proof}

\begin{theorem}
  \label{theorem:1-over-3}
  {\bf 1-over-3-Rikudo} is $\NP$-hard.
\end{theorem}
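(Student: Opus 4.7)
The plan is to reduce from {\bf HCH} by adapting the gadget construction of Theorem~\ref{theorem:withoutholes}. The central difficulty is that $\dom(m) = \{3x+1\mid x\in\N\}\cap[n]$ is prescribed a priori: we must be able to compute, purely from the instance, the exact cell of $\tau$ that hosts each value $1, 4, 7, \dots$ in every solution. So the reduction must produce a construction in which the combinatorial choices that encode the Hamiltonian cycle leave invariant the set of cells visited at positions $\equiv 1 \pmod 3$ along the solution path.

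First I would modify Theorem~\ref{theorem:withoutholes}'s edge gadgets, replacing each single cell by a short chain of $3$ cells (easy to realize on the hexagonal grid by rescaling/rotating the embedding of $H$). Then every ``vertex gadget $+$ outgoing edge gadget'' traversal covers $6+3=9$ cells, which forces the entry cell of each vertex gadget, and the first cell of each edge gadget, to sit at a position $\equiv 1 \pmod 3$ in the global numbering, regardless of which edges of the HCH cycle are actually used. Second, I would deal with the remaining $\equiv 1 \pmod 3$ position inside each $6$-cell vertex gadget (the $4$th cell of its traversal). A direct enumeration shows that a plain $6$-cell triangle admits several distinct Hamiltonian paths between two given accesses, so position $4$ is not a priori fixed. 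I would resolve this by either adding a few intra-gadget adjacencies to $p$ that make the Hamiltonian path unique for each choice of access pair, or by replacing the $6$-cell triangle by a slightly larger but rigid variant on the hexagonal grid whose intra-path $\equiv 1 \pmod 3$ cells depend only on the pair of accesses used. Hole cells already have their Hamiltonian paths forced by the $p$ of Theorem~\ref{theorem:withoutholes}, so their $\equiv 1 \pmod 3$ positions can be read off directly. Some minor padding at the start/end of the path may be needed so that $n$ is itself congruent to the right value mod $3$.

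With all $\equiv 1 \pmod 3$ cells identified in advance from the construction alone, $m$ is defined to assign $3x+1$ to the $x$-th such cell, and the correctness argument then follows the skeleton of Theorem~\ref{theorem:withoutholes}: any Hamiltonian cycle on $H$ lifts to a Rikudo solution automatically consistent with $m$ (since $m$ was computed from the same gadget structure), and conversely any solution yields a Hamiltonian cycle by reading off the order of vertex gadgets induced by $\adjs{s}$, exactly as in the proof of Theorem~\ref{theorem:withoutholes}. The hard part will be the vertex-gadget redesign: concretely verifying that for each ordered pair of accesses the intra-gadget Hamiltonian path is unique (or at least always hits the same cell at position $4$), while keeping the construction embeddable on the hexagonal grid without violating the $\{3x+1\}$ prescription elsewhere. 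Once this rigidity is established, the remainder of the reduction is essentially routine.
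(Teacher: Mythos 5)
There is a fundamental gap in your plan, and it sits exactly where you locate the ``hard part'' --- but it is worse than a gadget-rigidity issue. In \textbf{1-over-3-Rikudo} the map $m$ must assign each \emph{specific value} $3x+1$ to a \emph{specific cell}, fixed by the reduction before any solution is known. Your construction only aims at making the \emph{set} of cells visited at positions $\equiv 1 \pmod 3$ invariant over all Hamiltonian cycles of $H$; even if you achieved that, it would not let you write down $m$, because which value $3x+1$ lands on which of those cells is determined by the \emph{global order} in which the cycle visits the vertex gadgets --- and that order is precisely the $\NP$-hard unknown. Moreover, once every third value is pinned to a cell, the unplaced values between two consecutive anchors form a corridor of length at most $2$, so the anchored cells essentially dictate the entire route through the gadgets; a reduction from \textbf{HCH} could only define such an $m$ consistently if it already knew a Hamiltonian cycle of $H$, which defeats the reduction. (There is also a secondary problem: replacing the single-cell edge gadget by a $3$-cell chain breaks the argument of Theorem~\ref{theorem:withoutholes} that unused edges of $H$ are not traversed --- a pendant $3$-cell chain cannot be absorbed by a local detour the way a single pendant cell can, and if the path crosses it, ``using an edge gadget'' no longer certifies ``using an edge of $H$''.)

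This is why the paper does not reduce from \textbf{HCH} for this theorem. It reduces from \textbf{Planar 1-in-3-SAT}: the global path is fixed \emph{in advance} (an Eulerian walk along the doubled spanning tree of the variable/clause incidence graph $G_\phi$, suitably embedded in the hexagonal grid), so every third number can be consistently preplaced along it. The $\NP$-hard choices are then encoded entirely in the residual local freedom --- with two free cells between consecutive placed numbers, a boundary row of a macrocell can host a number in one of two positions, and this carries one bit. Choice-, duplicate-, negation-, wire-, clause-, cutter- and start-macrocells implement the propositional logic, and correctness reduces to checking the finitely many local solution patterns of each macrocell. If you want to salvage your approach, you would have to find a way to fix the global tour while leaving the hard decisions in sub-$3$-cell wiggle room, which is in effect reinventing the paper's SAT-based construction rather than an HCH-based one.
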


\begin{proof}
  We present a reduction from the problem {\bf Planar 1-in-3-SAT}
  (clauses of size three, satisfied by exactly one literal),
  which is proven to be $\NP$-hard in~\cite{df86}.
  A formula $\phi$ in conjunctive normal form (CNF) is {\em planar} when so is the bipartite graph $G_\phi$
  having one vertex for each variable of $\phi$, one vertex for each clause of $\phi$,
  and an edge between a variable $x_i$ and a clause $c_j$ whenever $x_i$ appears in $c_j$.
  From a planar 3-CNF formula $\phi$ (checking planarity can be done in linear time~\cite{ht74}),
  we slightly modify the graph $G_\phi$ while preserving planarity:
  each variable vertex $x_i$ is replaced by a binary tree (called {\em variable tree}) having as many leaves as
  occurrence of $x_i$ in $\phi$, and each of these leaves is connected to one clause in which $x_i$ appears.
  We also add a negation vertex between variable $x_i$ and clause $c_j$ if $x_j$ appears as $\neg x_i$ in $c_j$.
  We consider a planar embedding of this new graph $G_\phi$ into the graph underlying a flat hexagonal grid
  ({\em i.e.} vertices of $G_\phi$ are cells of the grid, and edges follow cell to cell adjacencies).
  Such an embedding can easily be computed in polynomial time
  (see~\cite{k93,t74} for reference, but naive greedy methods are enough for our purpose).
  Finally, for technical reasons to be explained later in this proof,
  we scale up the obtained graph on the hexagonal grid by a factor of two.
  See Figure~\ref{fig:planar1in3sat-graph} for an example.

  \begin{figure}
    \centerline{\includegraphics[scale=.5]{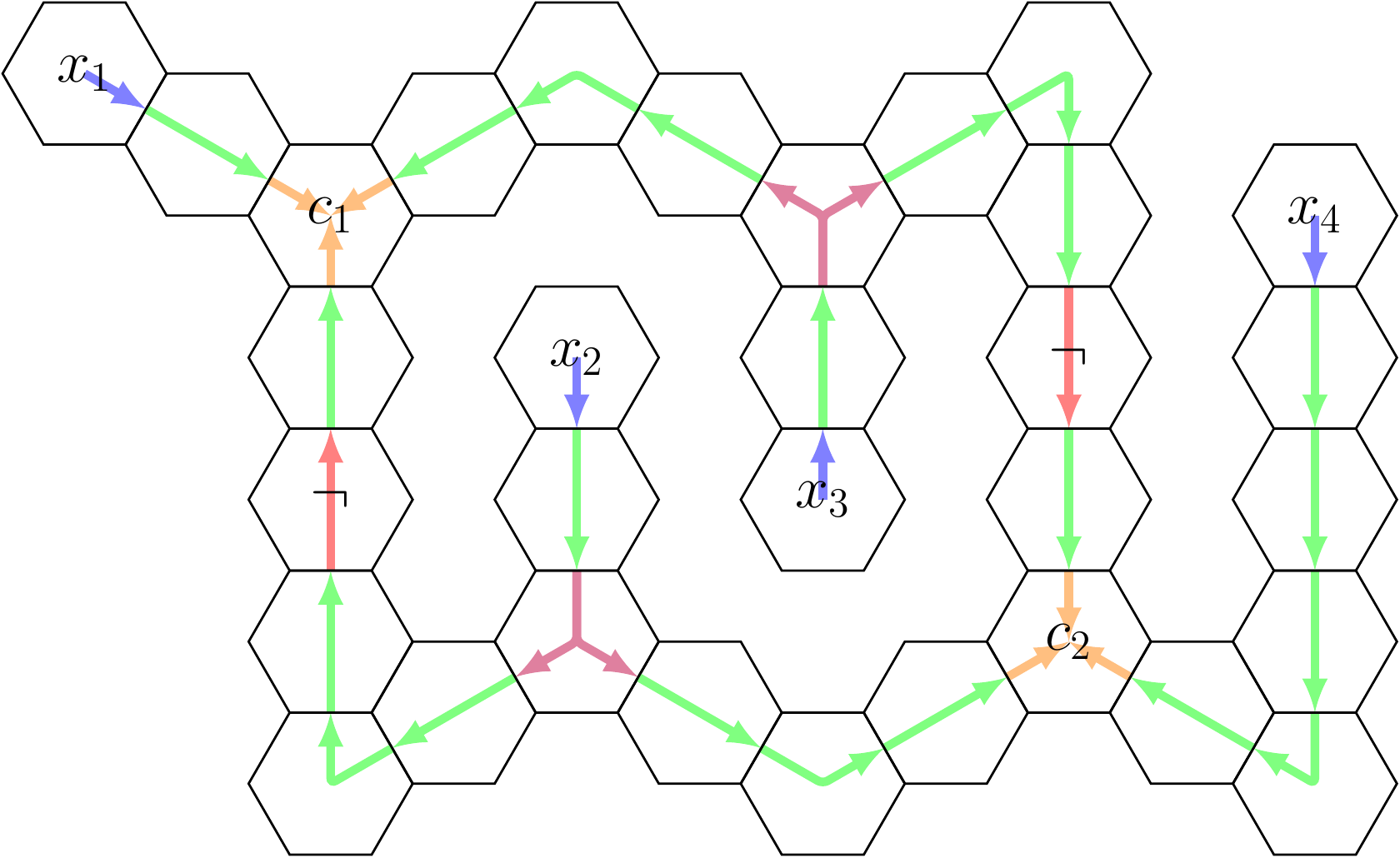}}
    \caption{
      Example embedding of the planar formula $\phi=(x_1 \vee \neg x_2 \vee x_3) \wedge (x_2 \vee \neg x_3 \vee x_4)$
      and its graph $G_\phi$ into a flat hexagonal grid.
      Variables in blue (choice-cells),
      binary trees in purple (duplicate-cells),
      negations in red (negation-cells),
      clauses in orange (clause-cells),
      edges in green (wire-cells).
    }
    \label{fig:planar1in3sat-graph}
  \end{figure}

  \medskip
  
  A {\em Rikudo} game is obtained by replacing each cell of the embedding of
  $G_\phi$ by a macrocell for the game, depending on the content of this cell.
  We distinguish five types
  of cells, and five corresponding types of macrocells.
  The game {\em macrocells} are formally defined on a subset of cells forming a flat hexagon of side length $9$,
  which are assembled side by side to form a {\em Rikudo} game.
  On the sides one row of cells from adjacent macrocells are merged.
  Illustrations are presented on Figure~\ref{fig:planar1in3sat-macrocells}.

  \begin{remark*}%[Solutions of a macrocell]
    When we refer to a {\em solution} of a macrocell,
    we mean to place all the numbers between the endpoints of its subpath(s) of numbers
    (macrocells have one, two or three subpaths). Furthermore, a solution {\em must}
    place numbers on all game cells {\em within} the macrocell
    ({\em i.e.} except possibly on its sides, precisions are given in the next remark on input bit).
    Indeed, if this is not the case then cells left empty within a macrocell will be left empty in
    the solution obtained by the assembly of macrocells' solutions.
  \end{remark*}

  \begin{itemize}
    \item The root of each variable tree is called a {\em choice-cell},
      its macrocell admits two kinds of solutions
      corresponding to {\em true} and {\em false} (one {\em bit}).
      In each solution one of the two cells on the top side hosts a number, and the other one does not:
      in the {\em true} solutions the left cell contains a number and the right cell does not,
      whereas in the {\em false} solutions the left cell does not contain a number and the right cell does.
      Remark that this is the convention for an {\em output} bit, for an {\em input} bit it is reversed.
  \end{itemize}

  \begin{remark*}
    More generally, the input and output bits of a macrocell are defined as follows.
    Let us call {\em number-cell} a cell where a number is already placed ({\em i.e.} in $\dom(m)$),
    and {\em game-cell} a cell where the player will place a number ({\em i.e.} in $\tau\setminus\dom(m)$).

    Consider the row of cells on an {\em output side}.
    In the clockwise order, it has: one number-cell, one game-cell $A$,
    one number-cell, and one game-cell $B$.
    A {\em true} (respectively {\em false}) {\em output bit} corresponds to cell $A$
    (respectively $B$) containing a number from this macrocell,
    whereas cell $B$ (respectively $A$) does not.

    Symmetrically, consider the row of cells on an {\em input side}.
    In the clockwise order, it has: one game-cell $B$, one number-cell,
    one game-cell $A$, and one number-cell.
    A {\em true} (respectively {\em false}) {\em input bit} corresponds to cell $A$
    (respectively $B$) containing a number from the adjacent macrocell,
    whereas cell $B$ (respectively $A$) does not. As a consequence, a solution to this macrocell
    must place a number in cell $B$ (respectively $A$), but not in cell $A$ (respectively $B$).

    Also observe that the merge of an input side with an output side matches the position of game-cells
    and numbers-cells. The assembly of macrocells will be precised just after the presentation of all macrocells.
  \end{remark*}

  \begin{itemize}
    \item An internal node of a variable tree is called a {\em duplicate-cell},
      given an input bit on the bottom side its macrocell admits only solutions which copy this bit
      to the top left and top right sides (it has one input and two outputs).
      In a solution each variable tree therefore has the same bit on all its leaves.
    \item A negation vertex is called a {\em negation-cell},
      given an input bit on the bottom side its macrocell admits only solutions which flip this bit
      to the top side (it has one input and one output).
    \item A clause vertex is called a {\em clause-cell},
      it has a solution if and only if exactly one of the three input bits on its sides is {\em true}
      (it has three inputs).
    \item All other non-empty cells are called {\em wire-cells},
      given an input bit on one side the macrocells admit only solutions
      which copy this bit to the other side (it has one input and one output).
  \end{itemize}
  These claims can easily be verified by hand on the figures from Appendix~\ref{a:macrocells},
  for the interested reader.
  Macrocells can be rotated (but not reflected, since this corrupts the subpaths merging).
  Note that a bit of information reads differently as an input and as an output of a macrocell,
  and that the subpath endpoints are placed accordingly so that an output merges an input.
  Macrocells are directed as the graph $G_\phi$ from Figure~\ref{fig:planar1in3sat-graph}.

  \begin{figure}
    \centering
    \begin{tikzpicture}
      \node (choice) at (-4.5,2.8) {\includegraphics[scale=.65]{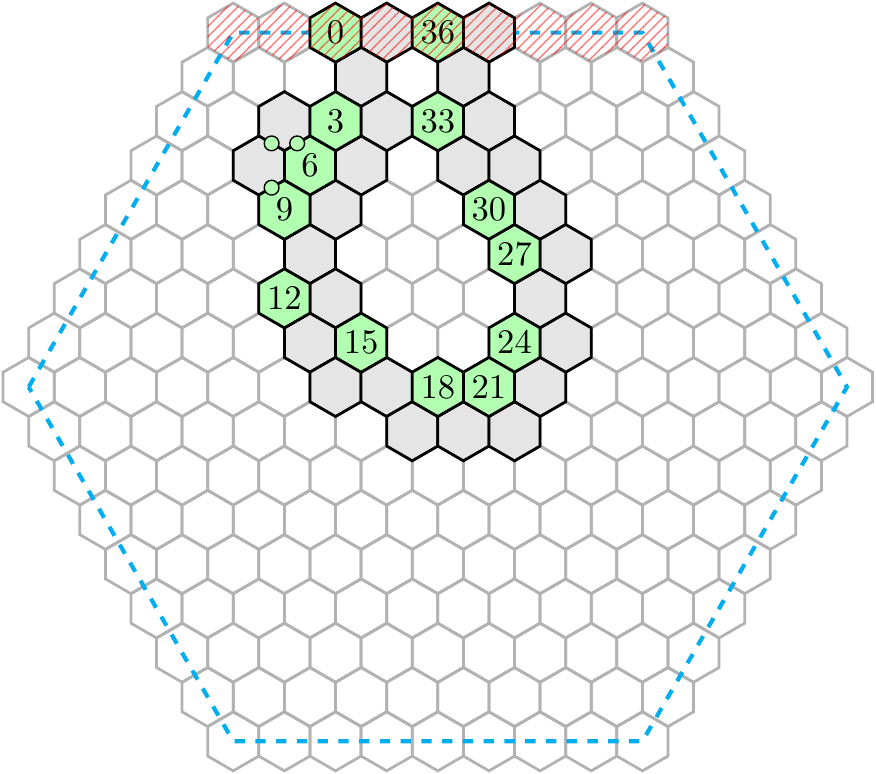}};
      \node at (choice.north) {\small Choice-macrocell};
      \node (duplicate) at (4.5,2.8) {\includegraphics[scale=.65]{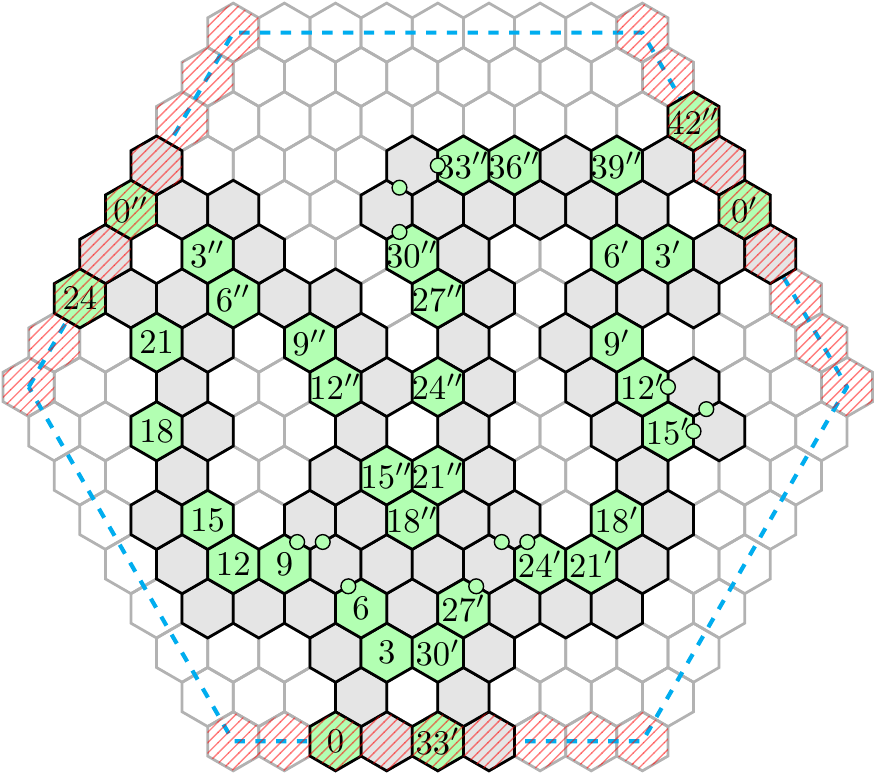}};
      \node at (duplicate.north) {\small Duplicate-macrocell};
      \node (negation) at (0,0) {\includegraphics[scale=.65]{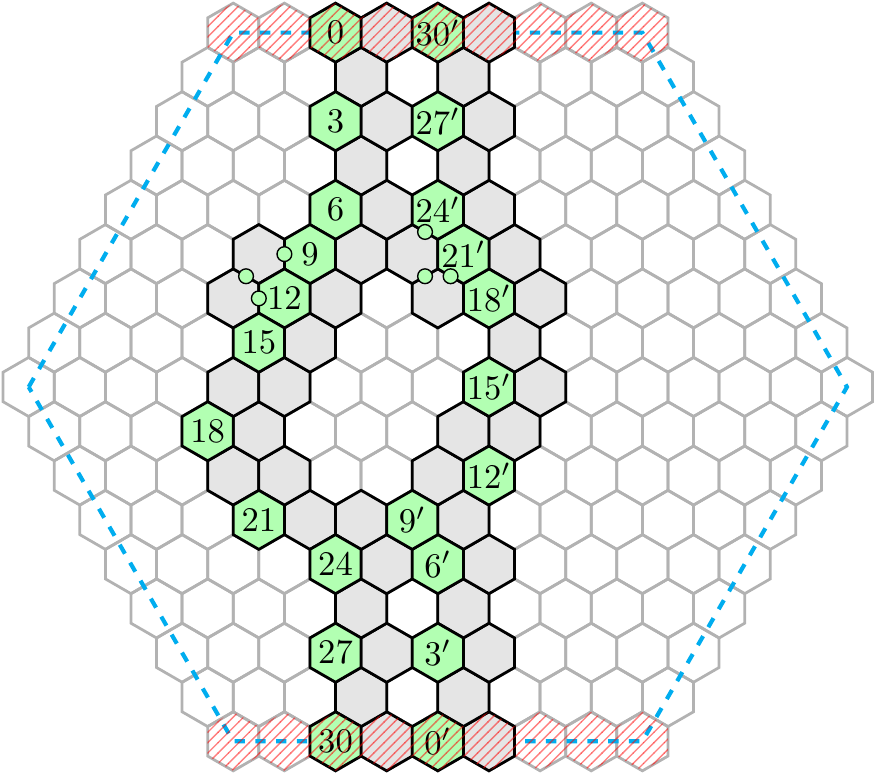}};
      \node at (negation.north) {\small Negation-macrocell};
      \node (clause) at (-4.5,-2.8) {\includegraphics[scale=.65]{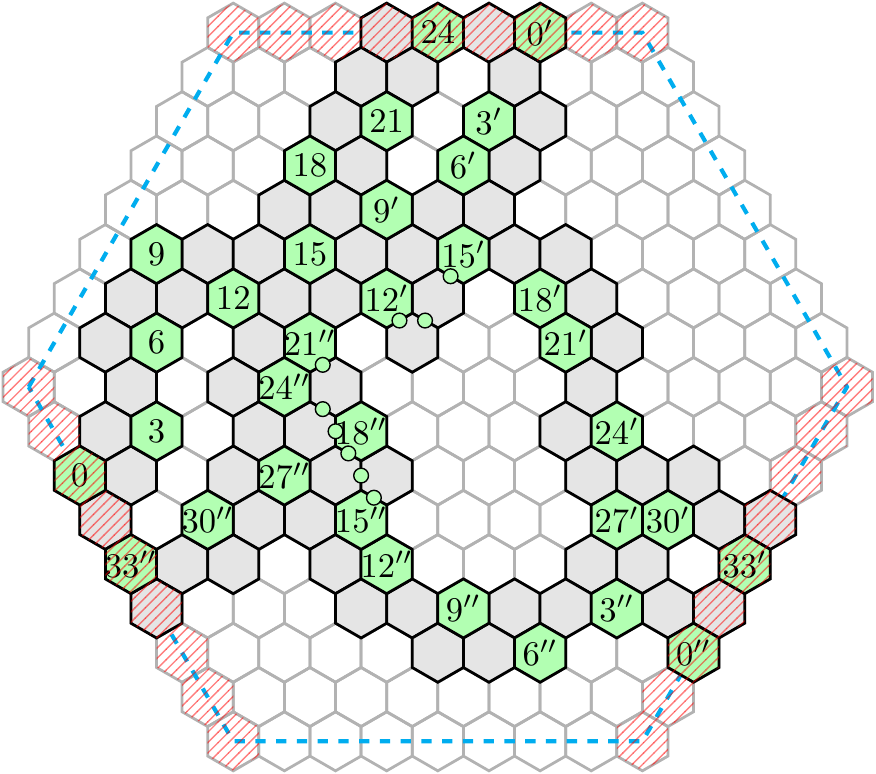}};
      \node at (clause.north) {\small Clause-macrocell};
      \node (wire) at (4.5,-2.8) {\includegraphics[scale=.65]{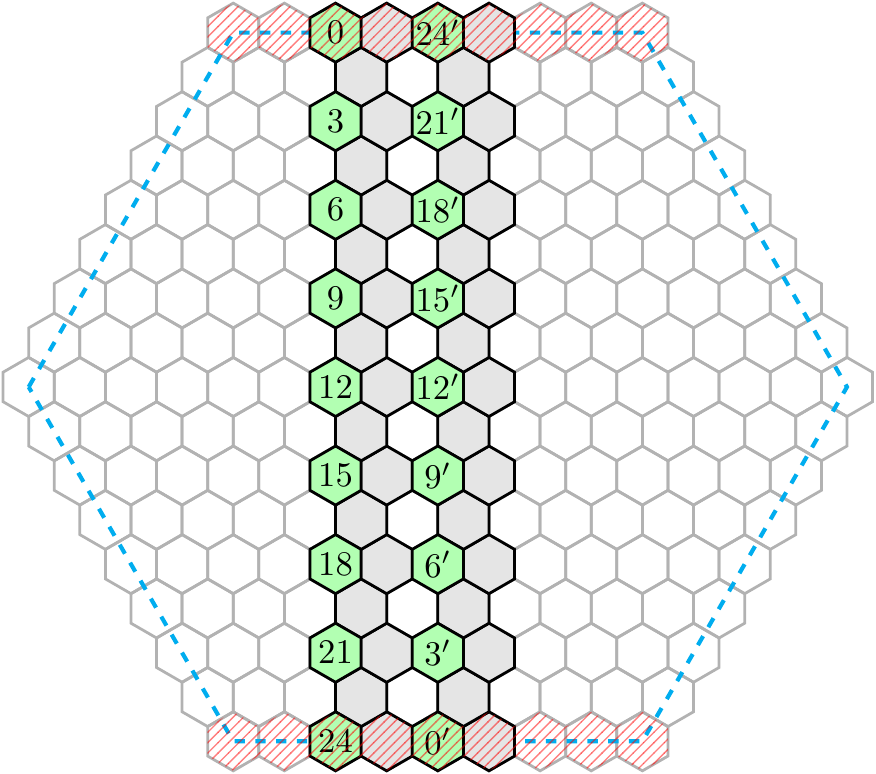}};
      \node at (wire.north) {\small Wire-macrocell};
      \node (wires) at (0,-8.5) {\includegraphics[scale=.5]{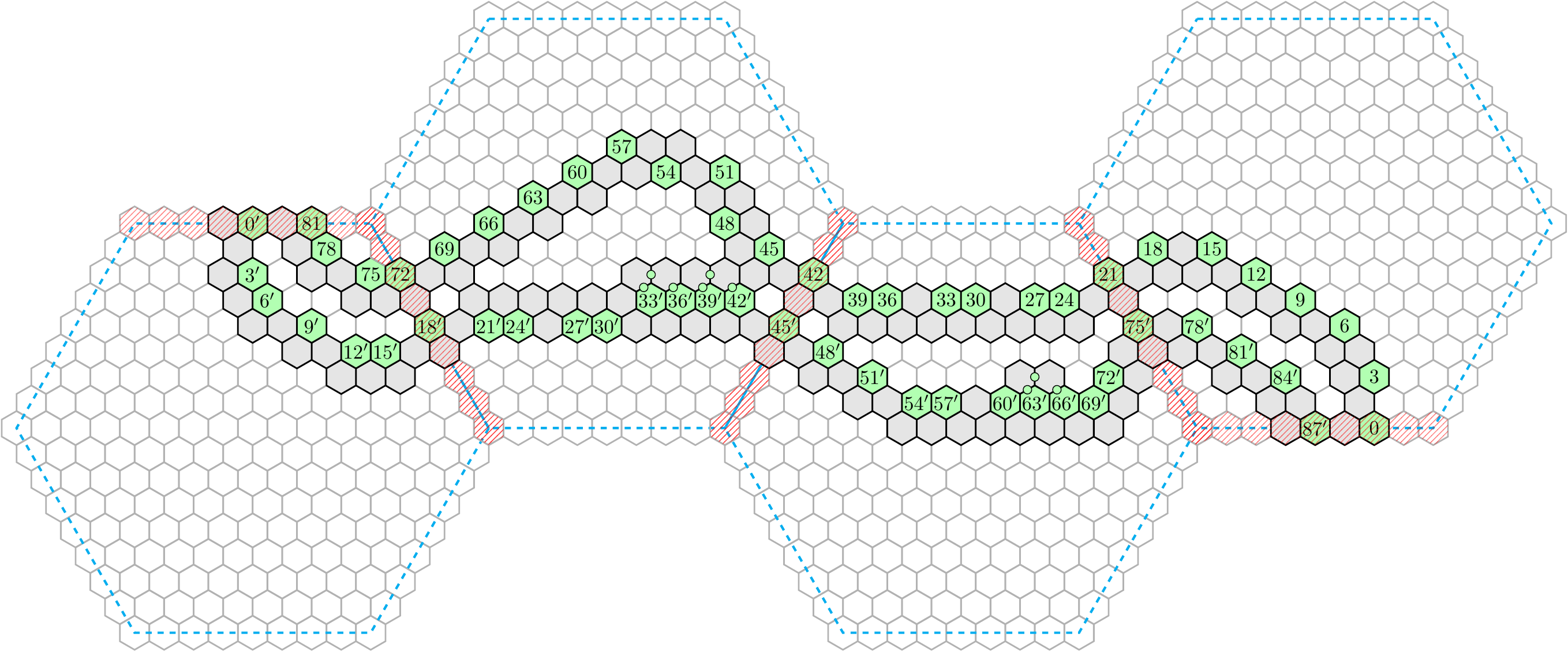}};
      \node at (wires.north) {\small Wire-macrocells};
    \end{tikzpicture}
    %\centerline{
    %  \includegraphics[scale=.75]{fig-macrocell-choice.pdf}
    %  \includegraphics[scale=.75]{fig-macrocell-duplicate.pdf}
    %}
    %\centerline{
    %  \includegraphics[scale=.75]{fig-macrocell-negation.pdf}
    %}
    %\centerline{
    %  \includegraphics[scale=.75]{fig-macrocell-clause.pdf}
    %  \includegraphics[scale=.75]{fig-macrocell-wire1.pdf}
    %}
    %\centerline{
    %  \includegraphics[scale=.4]{}
    %  \includegraphics[scale=.4]{}
    %  \includegraphics[scale=.4]{}
    %  \includegraphics[scale=.4]{}
    %}
    \caption{
      Macrocell types,
      with the rows merged with the adjacent macrocell hashed in red (input/output).
      Distinct subpaths of numbers are marked with $x$, $x'$ and $x''$
      starting at $0$, $0'$ and $0''$ respectively.
      Choice-macrocell (output on the top),
      duplicate-macrocell (one input on the bottom and two outputs),
      negation-macrocell (input on the bottom and output on top),
      clause-macrocell (three inputs),
      straight wire-macrocell (input on the bottom and output on top),
      and four bending wire-macrocells assembled
      (from input on left to output on right).
    }
    \label{fig:planar1in3sat-macrocells}
  \end{figure}

  \medskip

  Let us now give precisions on how macrocells are assembled,
  so that the obtained game may have a solution forming only one long path from $1$ to $n$
  (made by connecting all macrocell's subpaths together).
  One bit of information is transported by two subpaths of numbers.
  The game path will simply follow the Eulerian path on the doubling tree\footnote{The
  doubling tree consists in replacing each edge of the tree by two copies of itself,
  hence the Eulerian path corresponds to walking along the contour of the spanning tree.}
  associated to a spanning tree of $G_\phi$.
  The numbers already placed on two adjacent macrocells are identified
  at the merged row of cells on the common side:
  we can shift the numbers on each subpath of one macrocell to match the other,
  and reverse some subpath of numbers to be coherent relative to the increasing/decreasing order
  (in the walk on the spanning tree the orders of the subpaths of a bit are typically opposite).
  One last macrocell is required for wire-cells corresponding to edges not part of the spanning tree:
  a {\em cutter-macrocell} replaces one of the straight wire-cell on these edges
  (hence the factor two scaling creating a straight wire-cell on every edge).
  See the left of Figure~\ref{fig:planar1in3sat-cutter}, the cutter-macrocell copies the bit from the bottom side
  to the top side, but the two subpaths are shunted differently.
  Finally we choose some start (number $1$) and stop (number $n$) on a remaining straight wire-cell,
  as presented in the {\em start-macrocell} on the right of Figure~\ref{fig:planar1in3sat-start}.
  Figure~\ref{fig:planar1in3sat-path} presents a spanning tree
  and the way subpaths of already placed numbers are assembled.

  \begin{figure}
    \centerline{
      \includegraphics[scale=.65]{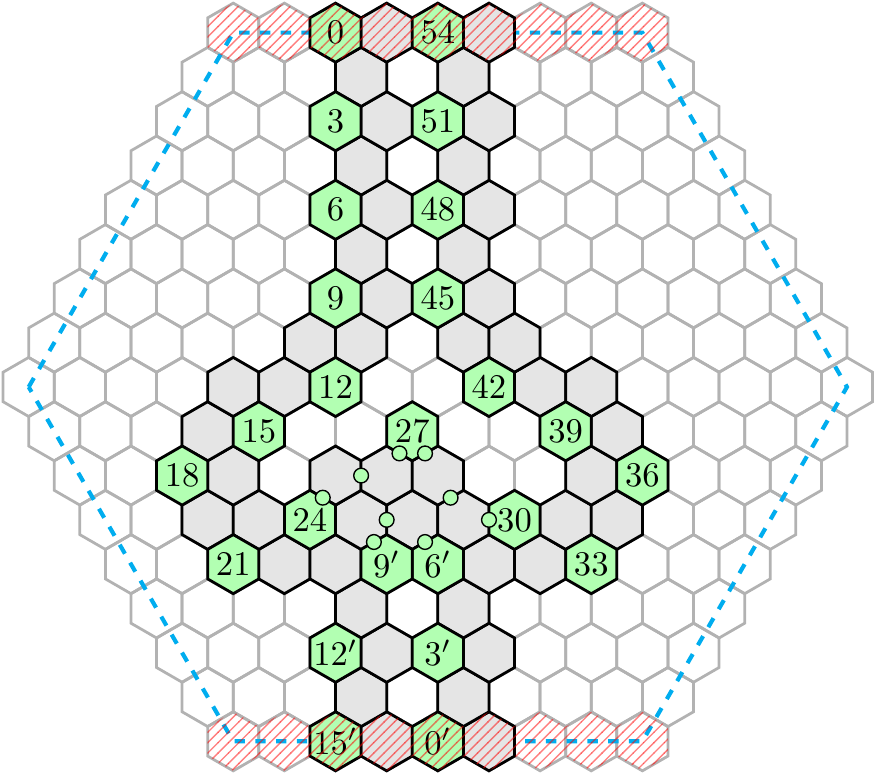}
      \includegraphics[scale=.65]{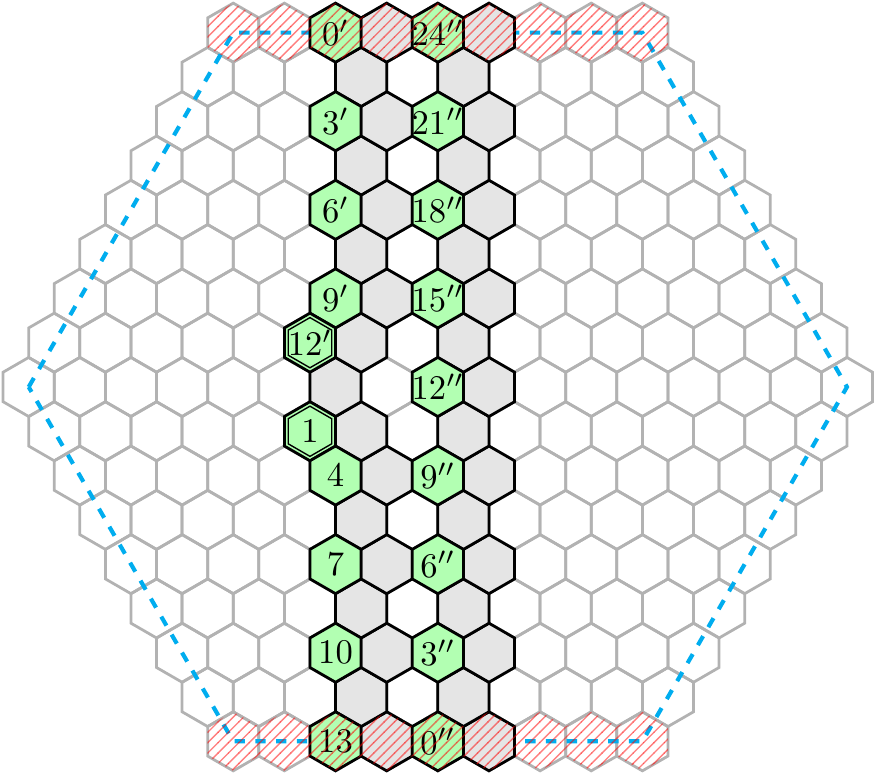}
    }
    \caption{
      Cutter-macrocell (left, input on the bottom and output on top)
      and start-macrocell (right, input on the bottom and output on top,
      the start is $1$ and stop is $12'$, the three subpaths may be shifted independently),
      replacing straight wire-cells.
    }
    \label{fig:planar1in3sat-cutter}
    \label{fig:planar1in3sat-start}
    \vspace*{1em}
    \centerline{\includegraphics[scale=.7]{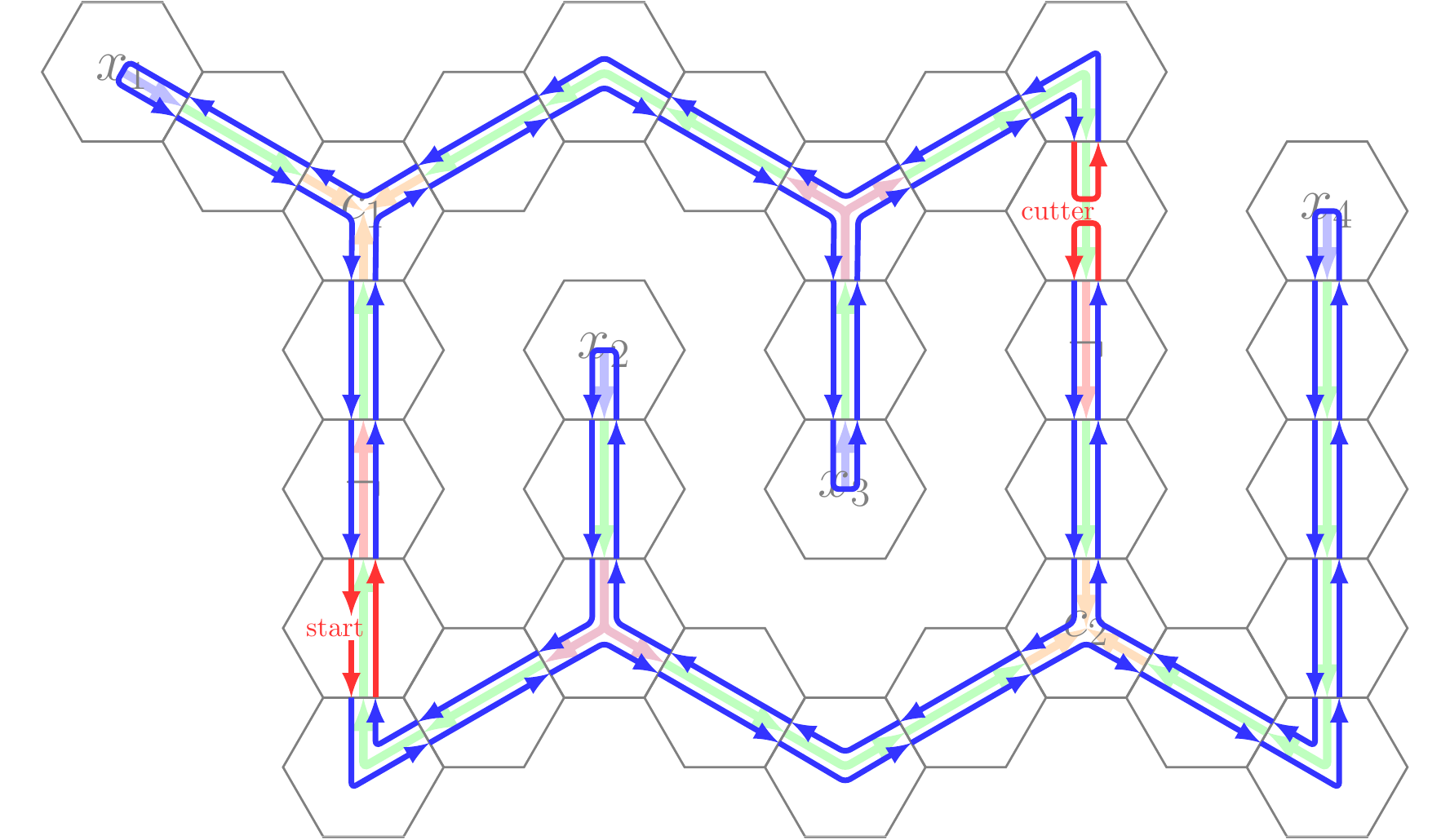}}
    \caption{
      Spanning tree and the way subpaths of already placed numbers are assembled,
      for the graph $G_\phi$ from Figure~\ref{fig:planar1in3sat-graph}.
      Blue arrows follow the increasing order of numbers, with the start-macrocell
      and cutter-macrocells (there is only one in this example) highlighted in red.
      Starting from number $1$ on the start-macrocell, we follow the blue arrows to assign
      subsequent number sequences to the macrocells' subpaths, leading to a unique path from $1$ to $n$
      (if all macrocells admit a solution).
    }
    \label{fig:planar1in3sat-path}
  \end{figure}

  %\begin{figure}
  %\end{figure}

  \medskip

  The construction is finished. If $\phi$ has a solution then we choose a satisfying assignment
  for the bits of the choice-macrocells, which are copied by duplicate-macrocells and transported
  by wire-cells to the clause-macrocells, each of these later having a solution
  (and the whole play forms a path from $1$ to $n$ as detailed in the previous paragraph).
  If $\phi$ has no solution then for any assignment for the bits of the choice-macrocells,
  these bits are copied by duplicate-macrocells and transported by wire-macrocells to the
  clause-macrocells, and at least one of the clause-macrocells does not have exactly one
  {\em true} among the three bits on its input sides, hence the game has no solution.
\end{proof}

%% (BEGIN COMMENT)
% first try for 1-over-3 from planar-exact-cover-by-3-sets does not work
\COMMENT{
\begin{proof}
  We present a reduction from the problem {\bf Planar exact cover by 3-sets}.
  Given three disjoint sets of elements $A,B,C$ with $|A|=|B|=|C|=q$
  and a collection of 3-sets $S\subseteq A\times B\times C$
  such that each element of $A,B,C$ belongs to two or three 3-sets from $S$,
  this problem asks for the existence of a subcollection $S^*$ of $q$ 3-sets from $S$
  such that each element of $A,B,C$ belongs to exactly one 3-set of $S^*$.
  With the additional constraint that the (bipartite) graph $G=(V,E)$
  where $V=A\cup B\cup C\cup S$ and $E=\{(a,s) \in (A\cup B\cup C)\times S\mid a \in s\}$
  is planar, it has been proven to be $\NP$-complete in~\cite{df86}.

  \medskip

  We construct a game from {\em element gadgets} connected via {\em sliding paths}.
  The element gadgets will enforce the player to choose exactly one subset for each element,
  and the sliding paths will enforce the choices to be coherent {\em i.e.}, that they correspond to 3-sets from $S$.
  Theses gadgets will have some pairs of already placed numbers (all different) of the form
  $(3x+1,3(x+1)+1)$, $(3y+1,3(y+1)+1)$, $(3z+1,3(z+1)+1)$,
  denoted $(x,x)$, $(y,y)$, $(z,z)$ on the figures for readability.
  There are two kinds of element gadgets, each of size two or three:
  \begin{itemize}
    \item the {\em one-free element gadgets} of size two and three
      presented on the left and center of Figure~\ref{fig:elementgadgets}
      enforce the player to place a number in all except exactly one of the distinguished cells,
    \item the {\em one-unfree element gadgets} of size two and three
      presented on the center and right of Figure~\ref{fig:elementgadgets}
      enforce the player to place a number in exactly one of the distinguished cells.
  \end{itemize}

  \begin{figure}
    \centerline{
      \TODO{}
    }
    \caption{
      Left: {\em one-free element gadget} of size three.
      Center: {\em one-free element gadget} of size two and {\em one-unfree element gadget} of size two
      (they are identical).
      Right: {\em one-unfree element gadget} of size three.
      Distinguished cells are hatched.
      Possible solutions are presented on the companion Figure~\ref{fig:elementgadgets-solutions}.
    }
    \label{fig:elementgadgets}
  \end{figure}

  An example of {\em sliding path} is given on Figure~\ref{fig:slidingpath},
  it consists in a path of cells on the hexagonal grid with two possible solutions
  (one corresponding to the shift by one cell of the other).

  \begin{figure}
    \centerline{
      \TODO{}
    }
    \caption{
      Example of {\em sliding path}
    }
    \label{fig:slidingpath}
  \end{figure}

  \medskip

  For each element $a\in A$, we build a one-unfree element gadget
  of the size corresponding to the number of 3-sets from $S$ to which $a$ belongs.
  For each element $b\in B$ (respectively $c\in C$), we build a one-free element gadget
  of the size corresponding to the number of 3-sets from $S$ to which $b$ (respectively $c$) belongs.
  We connect the element gadgets according to $S$ as follows:
  for each $(a,b,c) \in S$ we build a sliding path of length starting
  a distinguished cell of the gadget for $a$ to a distinguished cell of the gadget 

  \TODO{drawing a planar graph on the grid.}
  \TODO{connect all the $(x,x)$ using enforced adjacencies.}
\end{proof}
}

%It turns out that our reduction from {\bf Planar 1-in-3-SAT} is parsimonious
%(clause-macrocells \TODO{do have a unique solution} whenever exactly one of its input bits is {\em true},
%see Figure~\ref{fig:macrocell-clause} from Appendix~\ref{a:macrocells}),
%hence from~\cite{hmrs98} we have the following corollary.
%\begin{corollary}
%  \label{coro:SPoly}
%  Counting the number of solutions of a {\em Rikudo} game is $\SPoly$-complete.
%\end{corollary}

We strongly believe that {\em 1-over-$k$-Rikudo} is also $\NP$-hard for any $k\geq 4$
based on analogous constructions.
Also remark that the double-wire logics from the proof of Theorem~\ref{theorem:1-over-3}
may allow to simulate arbitrary computation in {\em Rikudo} games
(we have negation-macrocells to perform cross-over,
and clause-macrocells use some disjunction and conjunction of bits).
The game has some more variants ({\em e.g.} with indications in some cells of the parity of the number to be placed),
and it would be interesting to study how these new features may or may not embed complexity,
{\em i.e.} in some sense whether more clues make the game easier or not.

%Finally, Corollary~\ref{coro:SPoly} raises the fact that all our reduction and hardness results create games
%potentially having more than one solution, whereas official games usually have a unique solution.
%Nevertheless the parsimonious reduction ensures an analogous result for the {\bf Unambiguous} (zero or one solution)
%version of {\bf Rikudo} problem: if there is a polynomial time algorithm to solve it, then $\NP=\RP$~\cite{vv86}.

%%%%%%%%%%%%%%%%%%%%%%%%%%
\section*{Acknowledgments}

The authors are thankful to Papicri for bringing those games almost daily during France COVID-19 first lockdown.

\bibliographystyle{plain}
\bibliography{biblio}

%%%%%%%%%%%%%%%%%%%%%%%%%%
\newpage
\appendix

%%%%%%%%%%%%%%%%%%%%%%%%%%
\section{Solutions}
\label{a:solutions}

\begin{figure}[h!]
  \centerline{
    \includegraphics{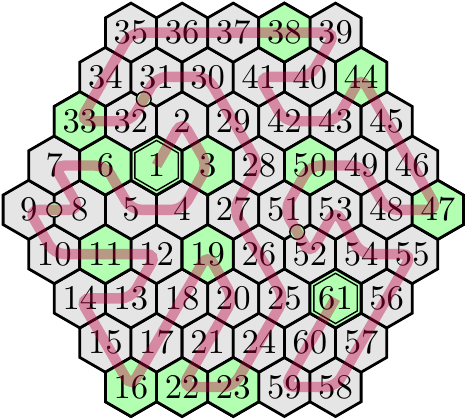}
    \hspace*{.5cm}
    \includegraphics{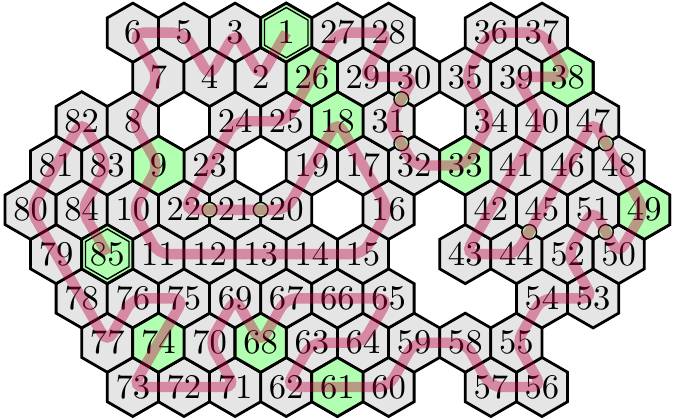}
  }
  \caption{
    Solutions to the games from Figure~\ref{fig:rikudo}.
  }
  \label{fig:solutions}
\end{figure}

%%%%%%%%%%%%%%%%%%%%%%%%%%
\section{Macrocells from the proof of Theorem~\ref{theorem:1-over-3}}
\label{a:macrocells}

Precisions on macrocell solutions,
and on the input and outputs bits, 
are given in the two remarks within the proof of Theorem~\ref{theorem:1-over-3}.

\begin{figure}[h!]
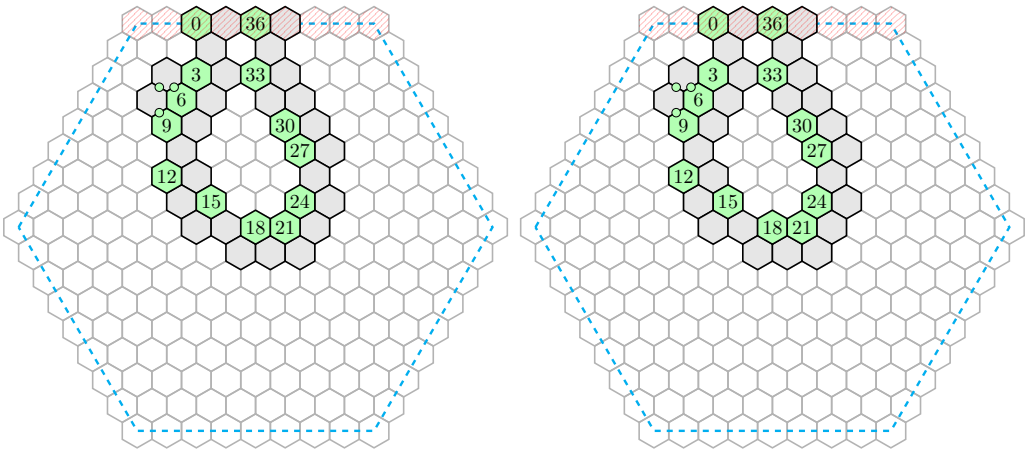

  \centerline{
    \includegraphics[scale=.75]{fig-macrocell-choice.pdf}
    \includegraphics[scale=.75]{fig-macrocell-choice.pdf}
  }
  \caption{
    Choice-macrocells have two kinds of solutions when placing the numbers from $0$ to $36$.
    On the top side, if the left cell hosts a number and the right cell does not then the output bit is {\em true},
    and if the left cell does not host a number and the right cell does then the output bit is {\em false}.
  }
  \label{fig:macrocell-choice}
\end{figure}

\begin{figure}[h!]
  \centerline{
    \includegraphics[scale=.75]{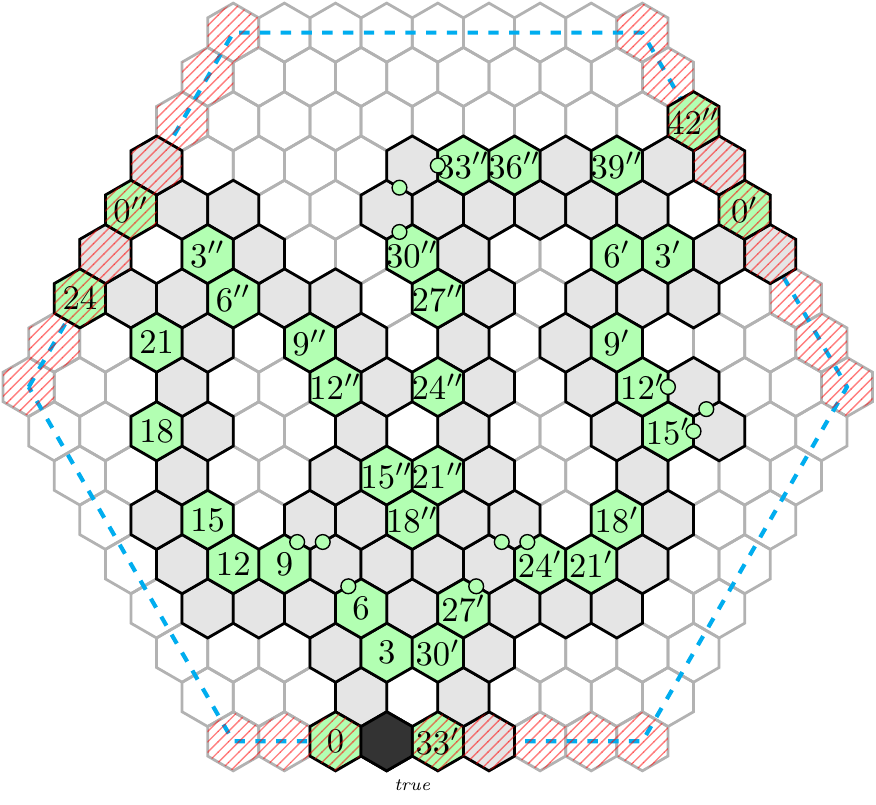}
    \includegraphics[scale=.75]{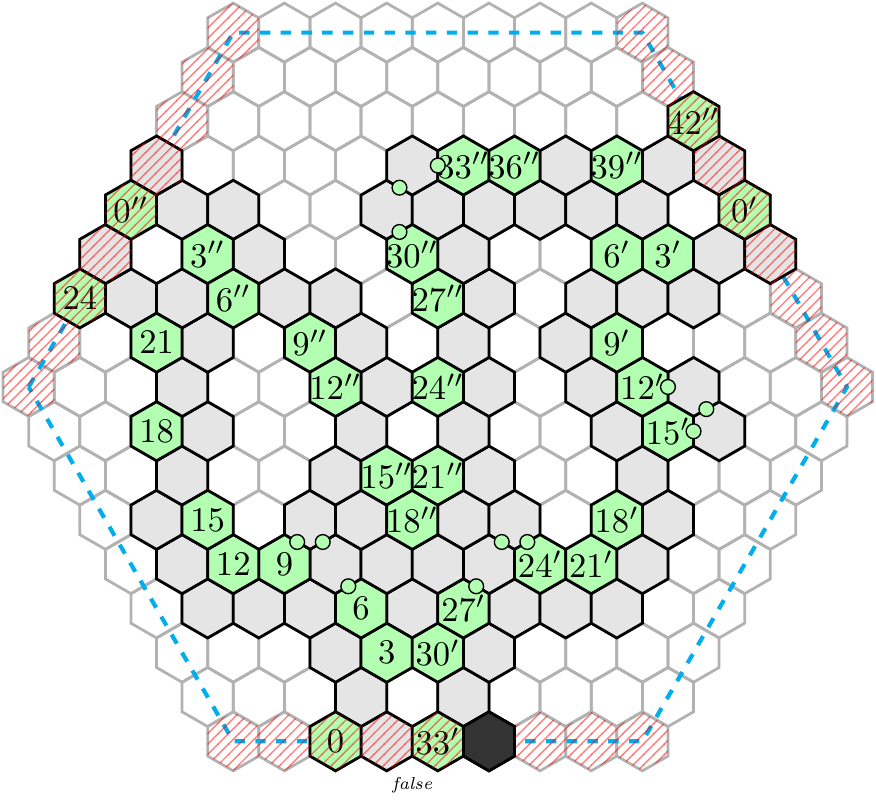}
  }
  \caption{
    Given an input bit on the bottom side, duplicate-macrocells have only solutions for placing the numbers from
    $0$ to $24$, from $0'$ to $33'$ and from $0''$ to $42''$, which copy the input bit as output bits to the two other sides.
    The {\em true} and {\em false} input bits consists in a cell on the bottom side
    which already hosts a number from the adjacent macrocell.
  }
  \label{fig:macrocell-duplicate}
\end{figure}

\begin{figure}[h!]
  \centerline{
    \includegraphics[scale=.75]{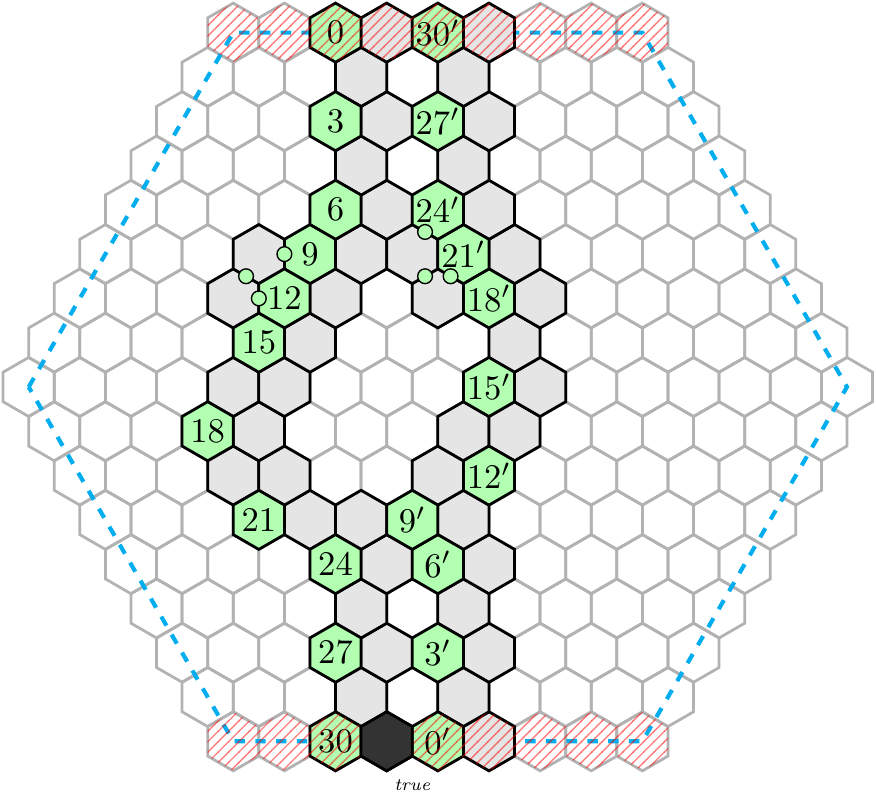}
    \includegraphics[scale=.75]{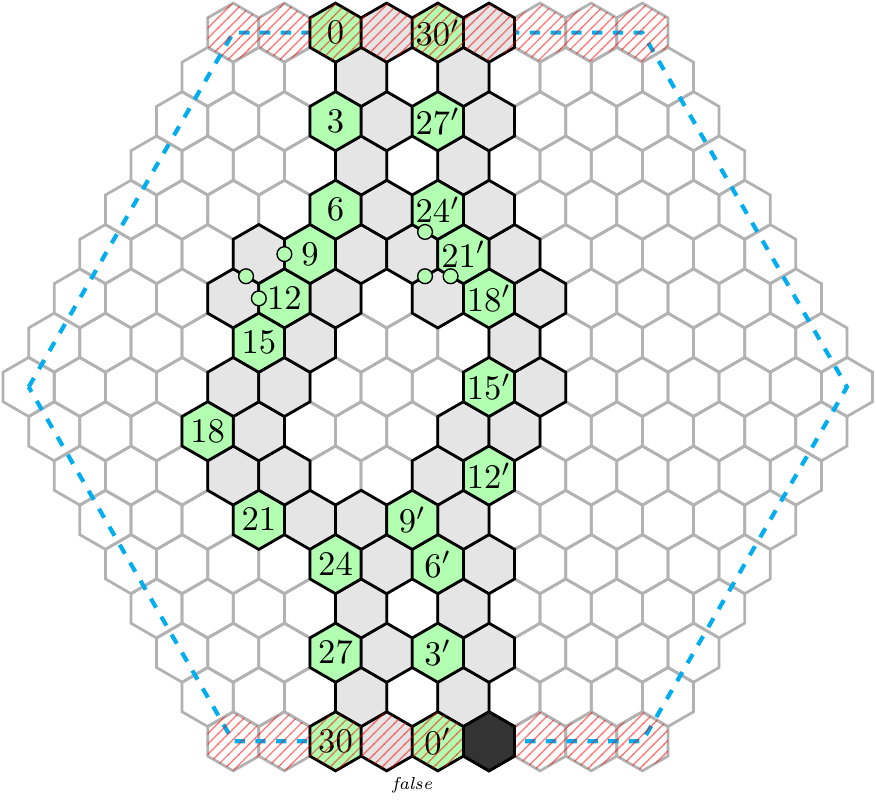}
  }
  \caption{
    Given an input bit on the bottom side, negation-macrocells have only solutions for placing the numbers from
    $0$ to $33$ and from $0'$ to $27'$, which flip this bit as an output bit to the top side.
    The {\em true} and {\em false} input bits consists in a cell on the bottom side
    which already hosts a number from the adjacent macrocell.
  }
  \label{fig:macrocell-negation}
\end{figure}

\begin{figure}[h!]
  \centerline{
    \includegraphics[scale=.75]{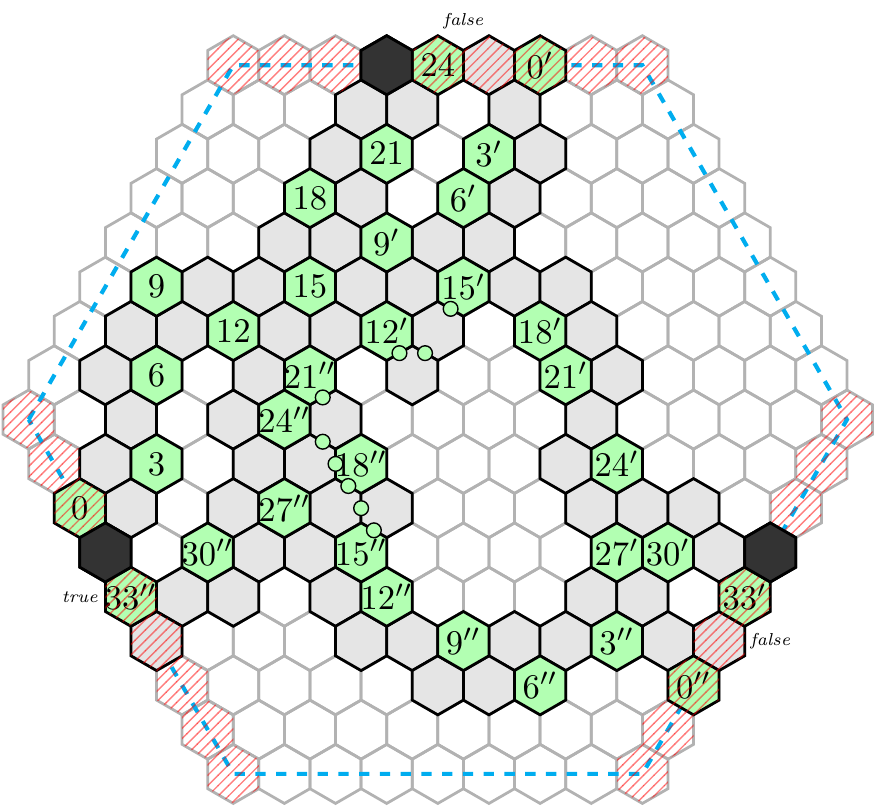}
    \includegraphics[scale=.75]{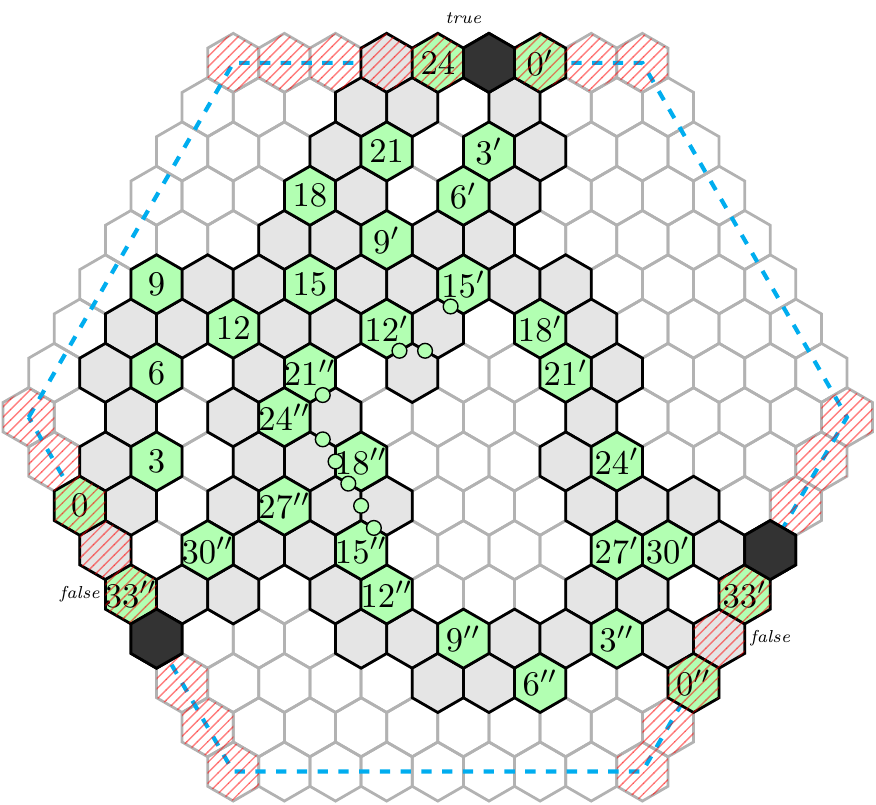}
  }
  \centerline{
    \includegraphics[scale=.75]{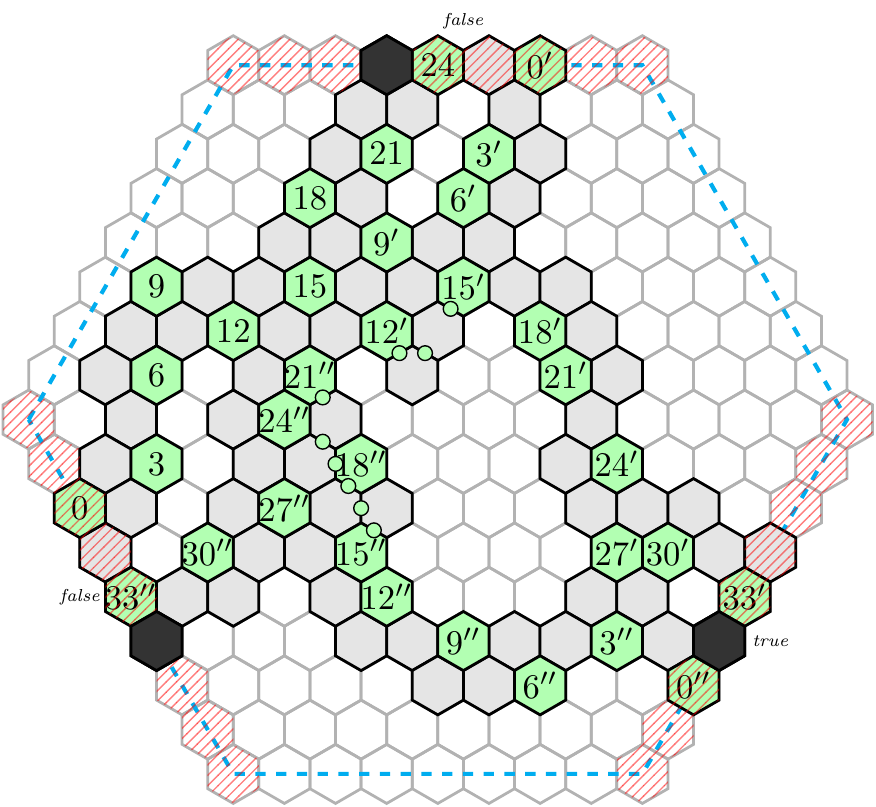}
    \includegraphics[scale=.75]{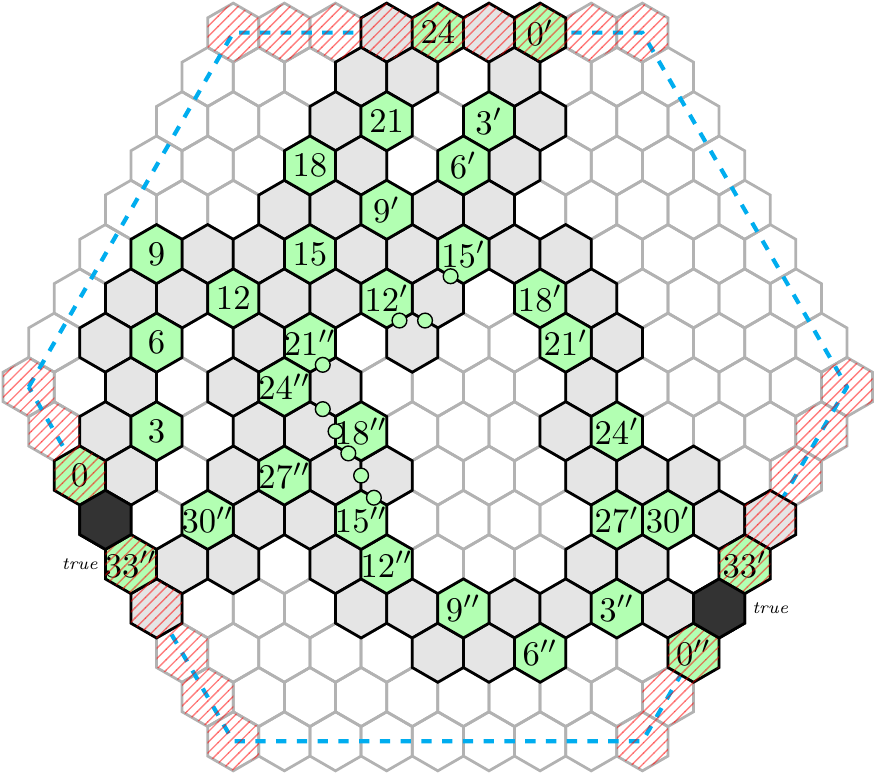}
  }
  \centerline{
    \includegraphics[scale=.75]{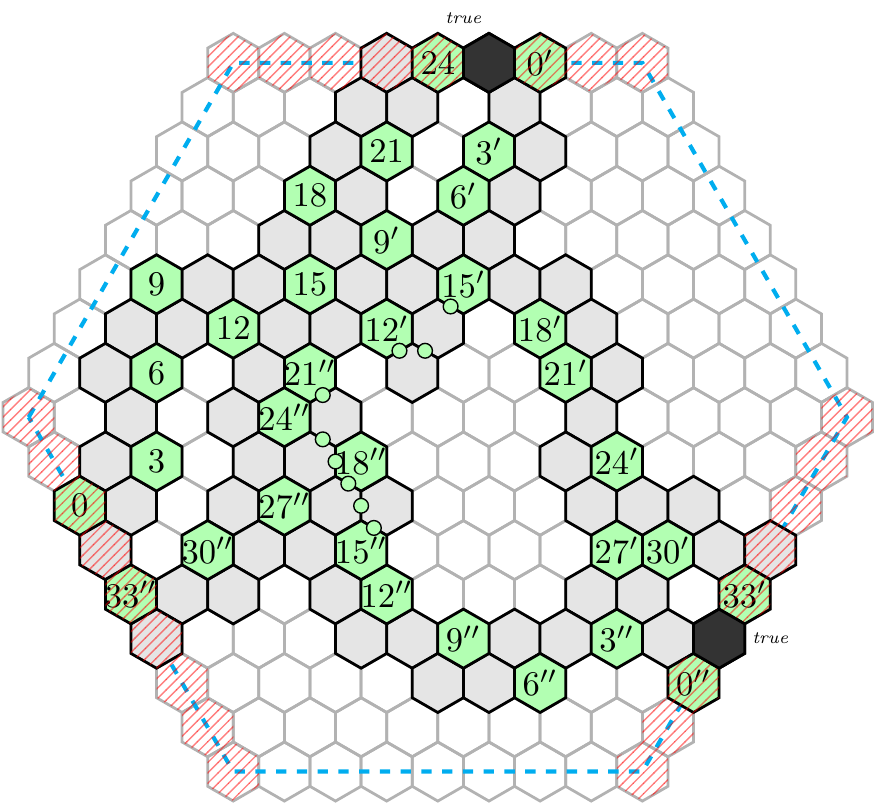}
    \includegraphics[scale=.75]{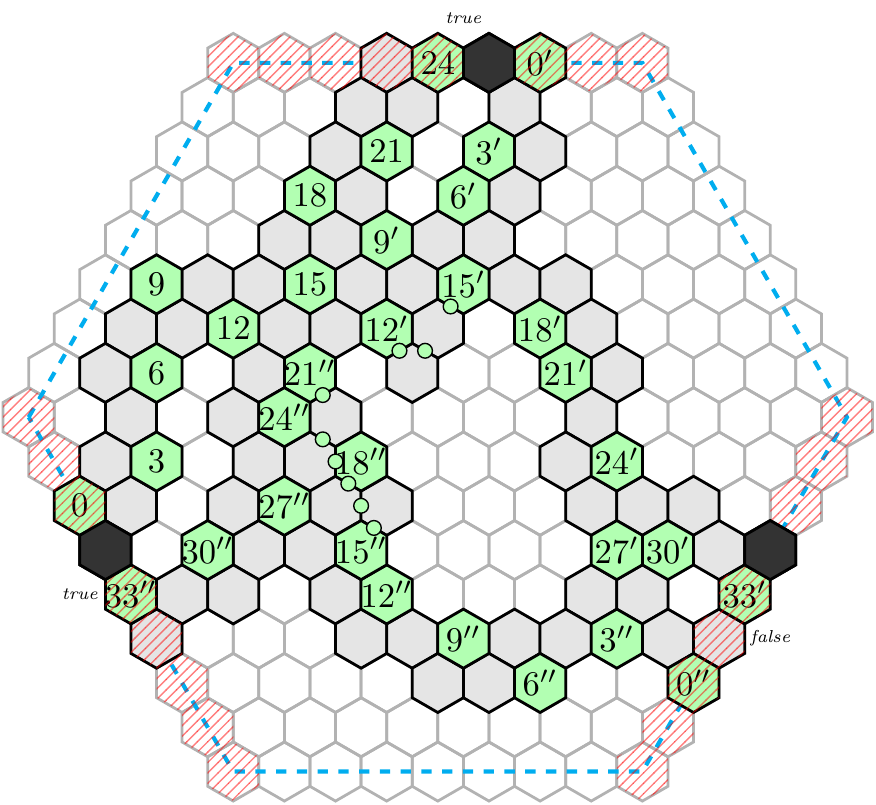}
  }
  \caption{
    Given three input bits on its sides, the clause-macrocell has a solution
    if and only if exactly one bit is {\em true}.
    The six pictures cover all the possibilities:
    when a bit is not indicated it means that regardless of its value the macrocell has no solution
    (because it already has the two other input bits set to {\em true}).
  }
  \label{fig:macrocell-clause}
\end{figure}

\begin{figure}[h!]
  \centerline{
    \begin{tikzpicture}
      \node at (0,0) {\includegraphics[scale=.5]{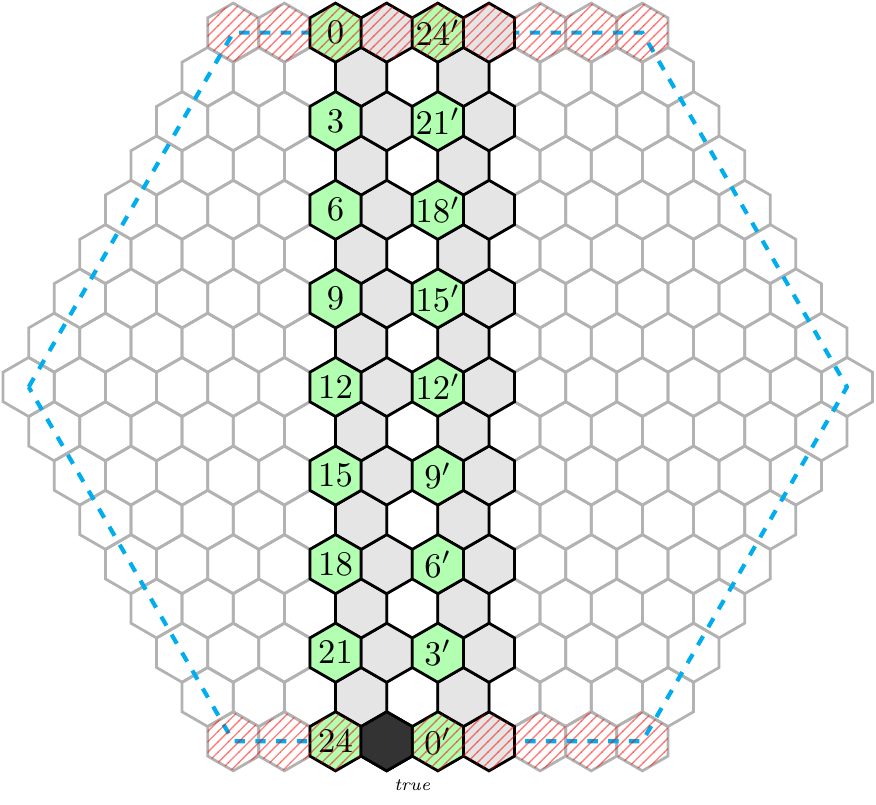}};
      \node at (6.6,0) {\includegraphics[scale=.5]{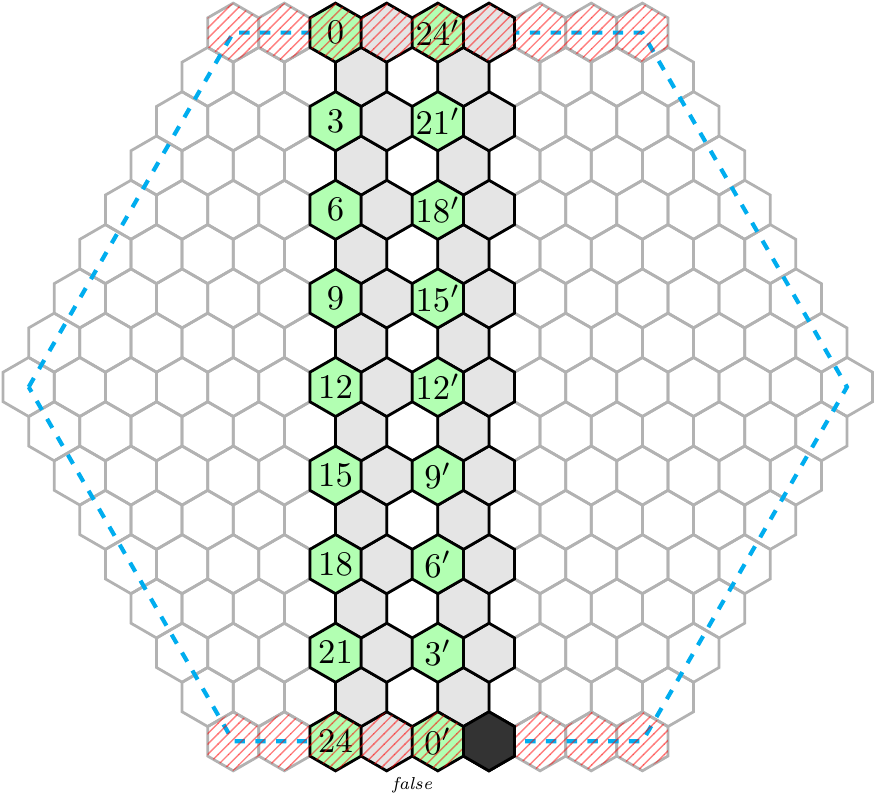}};
      \node at (3.3,-2.2) {\includegraphics[scale=.5]{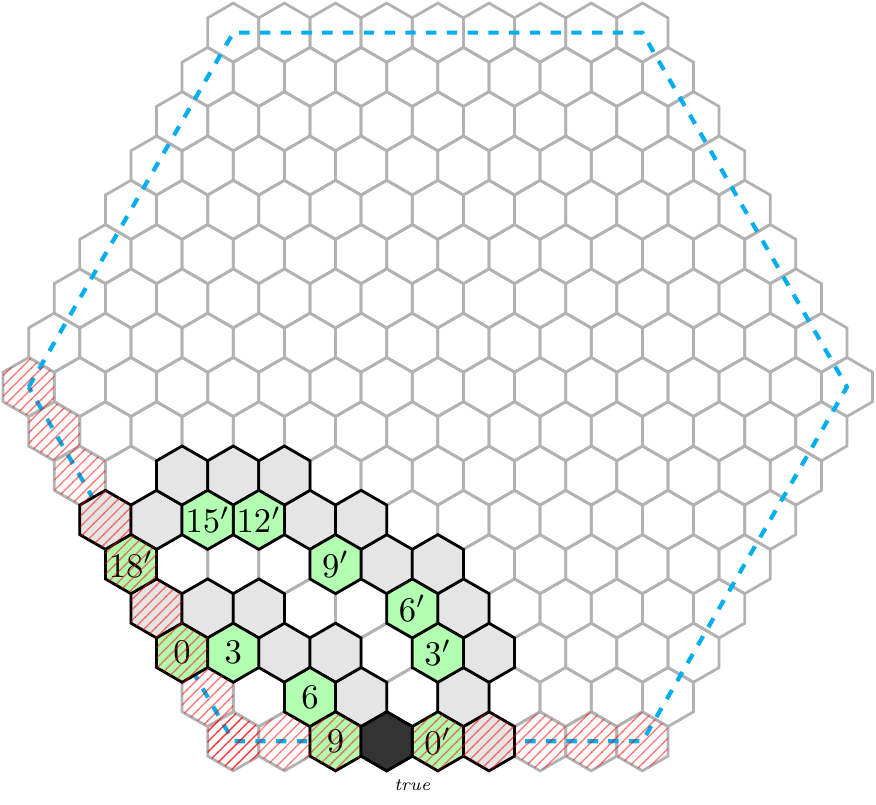}};
      \node at (9.9,-2.2) {\includegraphics[scale=.5]{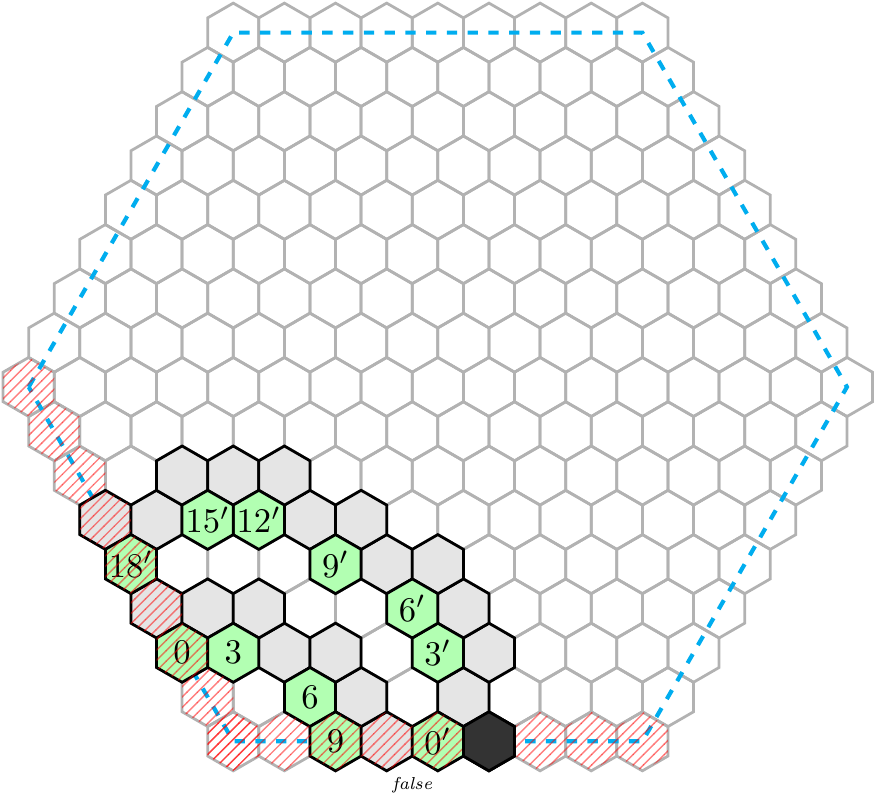}};
      \node at (0,-4.4) {\includegraphics[scale=.5]{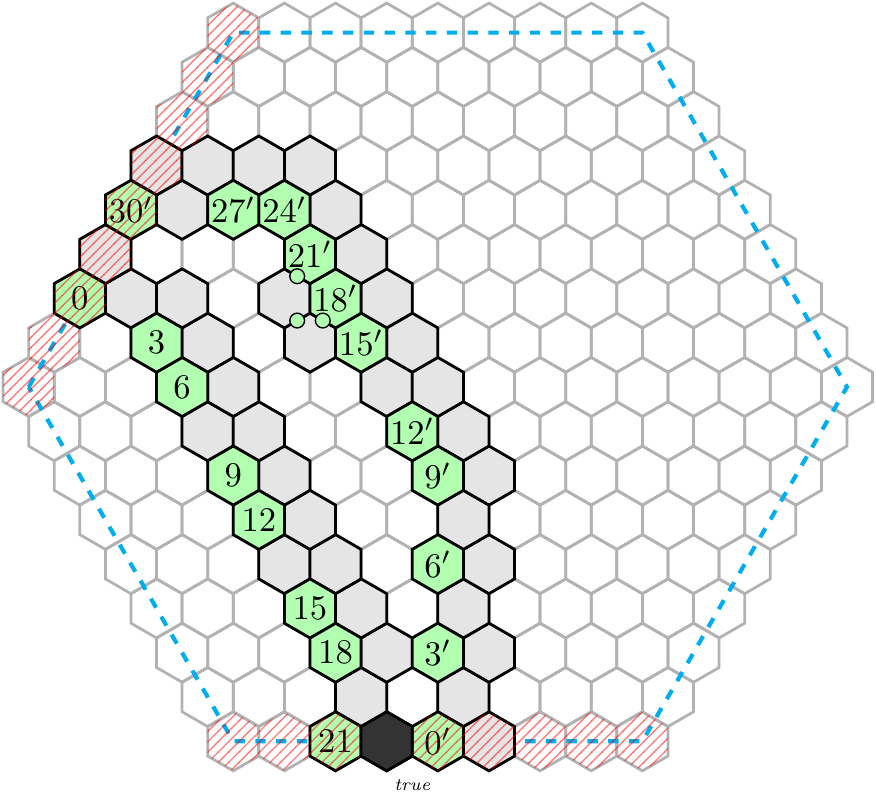}};
      \node at (6.6,-4.4) {\includegraphics[scale=.5]{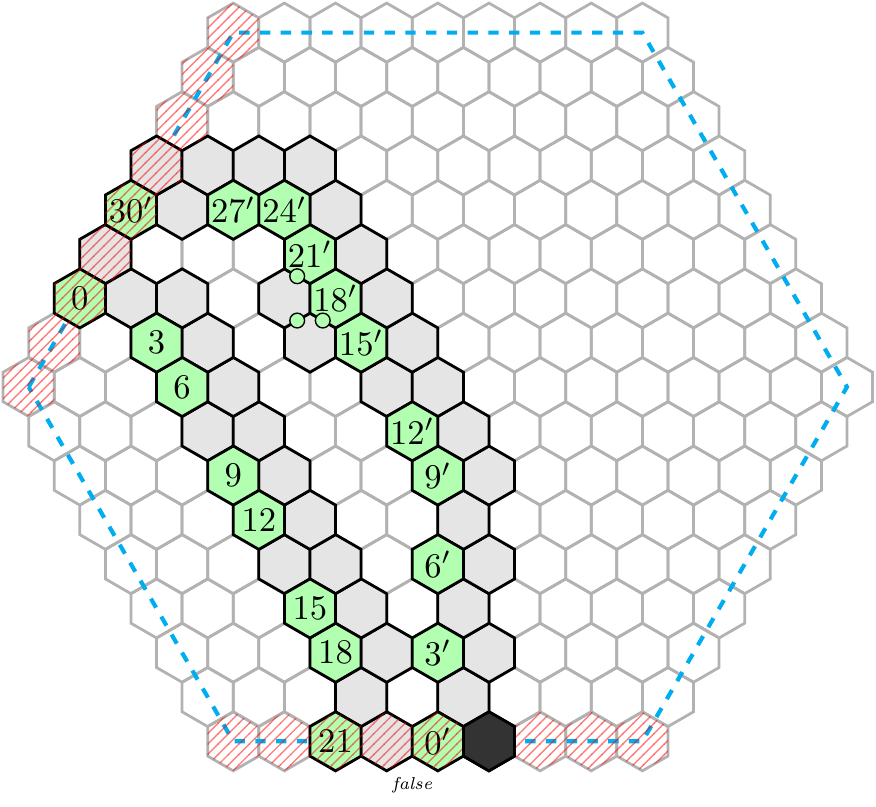}};
      \node at (3.3,-6.6) {\includegraphics[scale=.5]{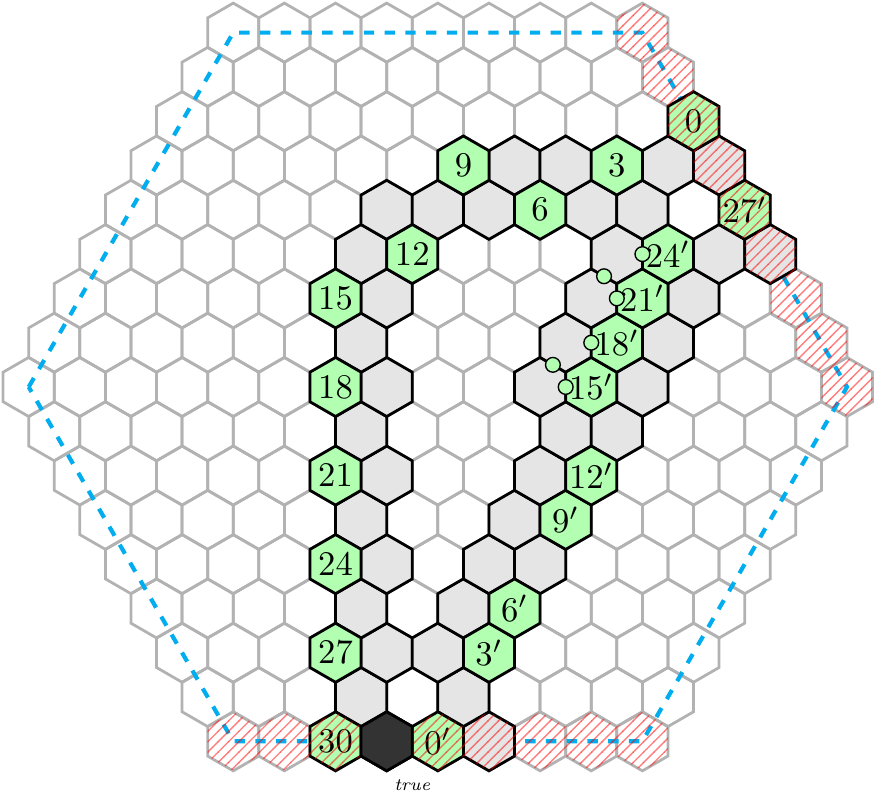}};
      \node at (9.9,-6.6) {\includegraphics[scale=.5]{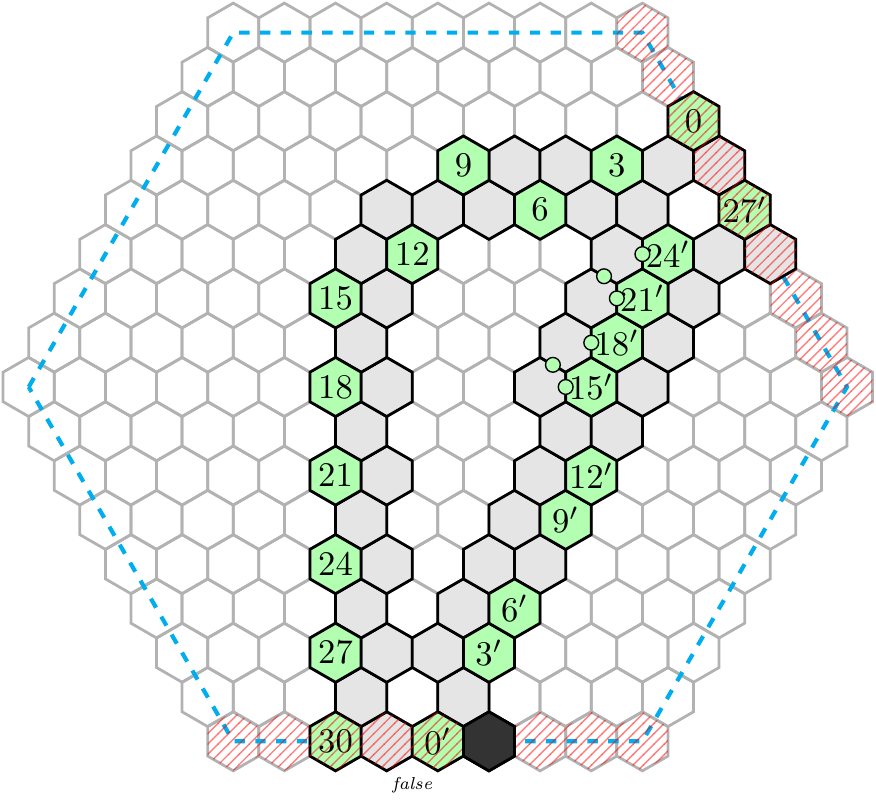}};
      \node at (0,-8.8) {\includegraphics[scale=.5]{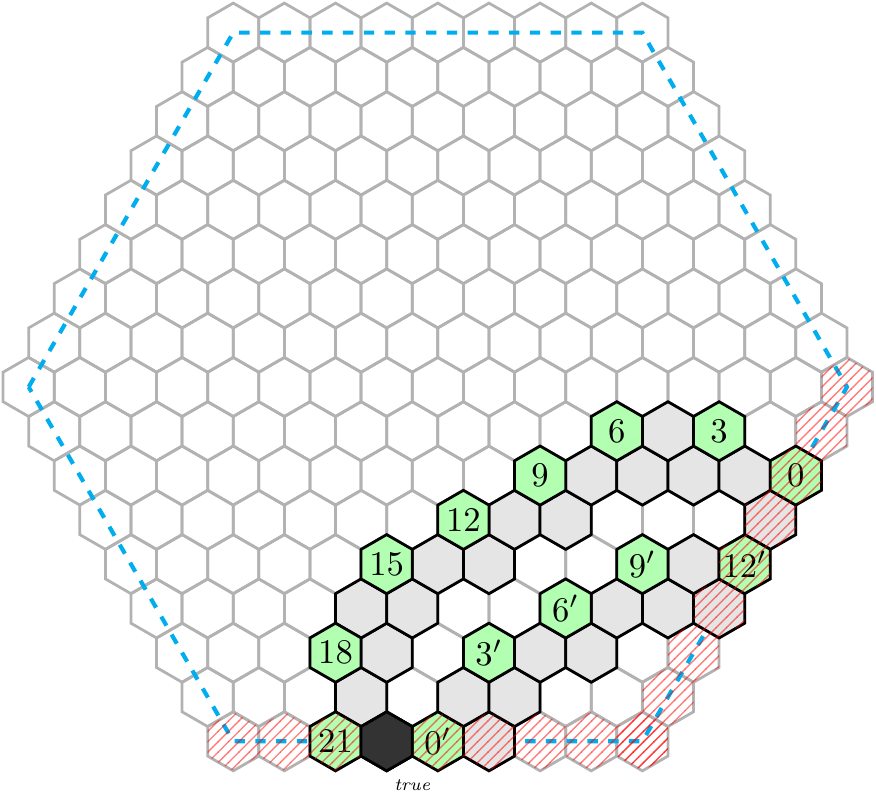}};
      \node at (6.6,-8.8) {\includegraphics[scale=.5]{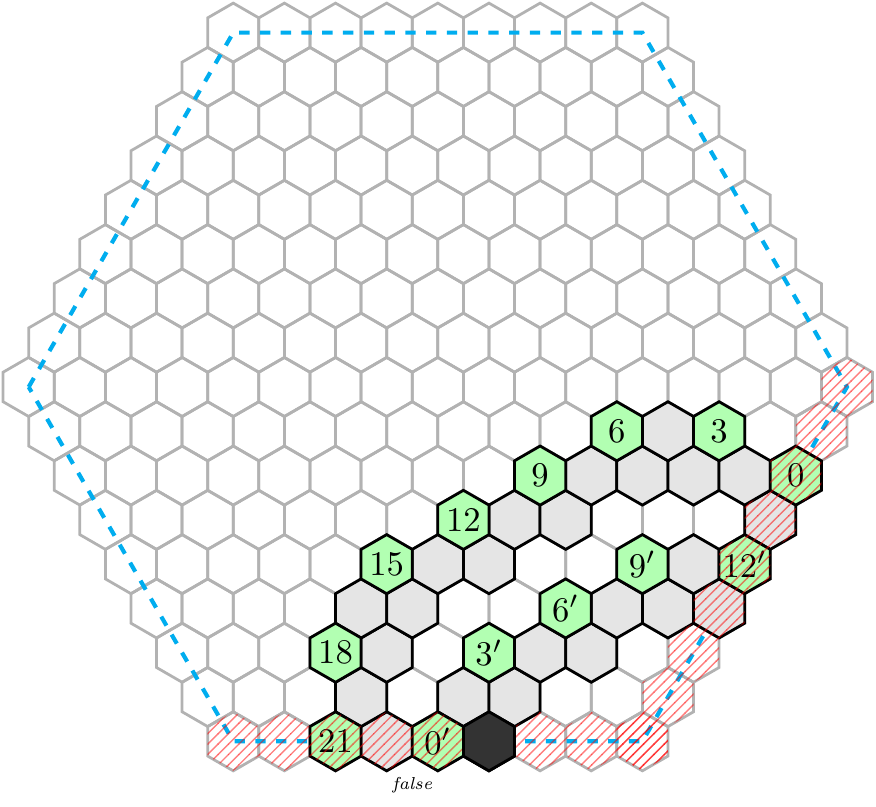}};
    \end{tikzpicture}
  }
  \caption{
    Given an input bit on the bottom side, each wire-macrocell has only solutions 
    which copy this bit as an output bit to the other side.
    The {\em true} and {\em false} input bits consists in a cell on the bottom side
    which already hosts a number from the adjacent macrocell.
  }
  \label{fig:macrocell-wires}
\end{figure}

\end{document}